\documentclass[letterpaper]{article}
\usepackage[preprint]{aaai2027}
\usepackage[hyphens]{url}
\usepackage{graphicx}
\usepackage{natbib}
\usepackage{caption}
\usepackage{booktabs}
\usepackage{tabularx}
\usepackage{amsmath,amssymb,amsthm}
\usepackage{mathtools}
\usepackage{algorithm}
\usepackage{algorithmic}
\usepackage{placeins}

\newtheorem{theorem}{Theorem}
\newtheorem{proposition}{Proposition}
\newtheorem{lemma}{Lemma}
\newtheorem{corollary}{Corollary}

\DeclareCaptionStyle{ruled}{labelfont=normalfont,labelsep=colon,strut=off}

\title{ATLAS: Learning to Recommend Across Unseen Domains}

\author{
    Pervez Shaik,
    Prosenjit Biswas,
    Abhinav Thorat,
    Ravi Kolla,
    Niranjan Pedanekar
}
\affiliations{
    User Engagement Research, Sony Research India\\
    \{shaik.pervez, prosenjit.biswas, abhinav.thorat, ravi.kolla, niranjan.pedanekar\}@sony.com
}

\begin{document}

\maketitle

\begin{abstract}

Recommender systems remain fundamentally domain-bound: models trained on one interaction environment typically require retraining or target-domain adaptation before they can operate on a new catalogue. For example, a recommender trained on movies cannot be directly deployed to recommend groceries or video games. Existing approaches mitigate this limitation by transferring restricted forms of recommendation knowledge, adapting to the target domain, or leveraging large language models (LLMs) to obtain transferable representations. We instead investigate whether recommendation-specific knowledge learned solely from multiple heterogeneous domains can generalize to entirely unseen domains without target-domain adaptation or language-model pretraining. To this end, we introduce ATLAS, a multi-source recommendation domain generalization framework that learns a shared, domain-invariant user--item representation from multiple disjoint source domains, enabling zero-shot recommendation on unseen domains. ATLAS builds this shared space through three components: a Gromov--Wasserstein alignment that preserves how users relate to one another across domains, and an adversarial objective that makes item representations indistinguishable across domains. These are combined with residual vector quantization (RVQ) codebooks that compress user and item embeddings into a discrete latent space, capturing hierarchical interaction patterns while suppressing domain-specific variation. The contribution of each component is validated through comprehensive ablation studies and post-hoc analyses of the learned representation space. Through extensive experiments, we demonstrate that when ATLAS is trained on five Amazon domains and directly applied to ten entirely unseen domains, it outperforms state-of-the-art sequential, graph-based, cross-domain, quantization-based, and LLM-based recommendation baselines on the majority of unseen domains, with an average relative gain in HitRate (HR) of 24\%. Furthermore, we identify a pronounced source-domain diversity effect, showing that increasing the heterogeneity of source domains substantially improves zero-shot transfer performance. These findings demonstrate that ATLAS establishes recommendation domain generalization as a promising paradigm for zero-shot recommendation beyond target-domain adaptation and LLM-based recommenders.
\end{abstract}

\section{Introduction}
\label{sec:introduction}
Modern recommender systems remain largely domain-specific. Models trained
on one interaction environment typically assume that deployment users,
items, and behavioural distributions resemble those observed during
training. This assumption limits their reuse across new catalogues,
markets, and product categories, where user and item representations
must be rebuilt and the model retrained. Strong
collaborative, graph-based, and sequential recommenders such as
LightGCN \cite{he2020lightgcn}, SASRec \cite{kang2018self}, and BERT4Rec \cite{sun2019bert4rec} therefore achieve high in-domain accuracy
but do not naturally transfer to unseen domains containing entirely new users
and items.

Cross-domain recommendation seeks to address this limitation by learning domain-invariant signals between related source and target domains. Existing methods exploit overlapping entities, shared latent spaces, distribution alignment, or target-domain fine-tuning, and more recent approaches support partially or fully non-overlapping users, but the target domain generally participates during training, adaptation, or joint optimization; they therefore address transfer to a known target rather than deployment to a domain absent throughout learning. Universal and pretrained recommenders broaden transfer beyond a single source--target pair: UniSRec \cite{hou2022towards} and ZESRec \cite{ding2021zero} learn transferable sequential or semantic representations, PrepRec \cite{wang2024pre} and PreRec \cite{lin2024pre} transfer popularity patterns or recommendation priors across datasets, and foundation-style systems such as P5 \cite{geng2022recommendation}, TALLRec \cite{bao2023tallrec}, and RecBase \cite{zhou2025recbase} instead rely on language-model or large-scale generative pretraining. These methods substantially improve recommendation transfer, but they obtain it through target adaptation, specialised transferable signals, or language and generative pretraining, and do not explicitly study whether one frozen retrieval model, learned solely from multiple heterogeneous source domains, can directly serve multiple completely unseen recommendation domains.

We study this problem under a setting we term \textbf{Recommendation Domain Generalization (RDG)}. In this setting, a model is trained jointly on (K) heterogeneous source domains and subsequently applied to a target domain that is unseen during training, without any parameter updates. The source and target domains share neither users nor items. While target-user histories may be used to construct personalised queries at inference time, no target interaction is ever used for optimisation, model selection, or representation adaptation. RDG thus differs fundamentally from cross-domain recommendation, in which the target domain is seen during training, as well as from universal pretraining followed by downstream fine-tuning. 

We propose ATLAS, a multi-source framework for RDG that operates on the central hypothesis that heterogeneous recommendation domains share transferable user--item interaction structure despite differences in their users, catalogues, and semantics. We refer to this shared structure as \emph{recommendation knowledge}, and posit that if source-specific correlations can be suppressed while this common structure is preserved, a scoring function learned across diverse source domains should remain useful in an unseen domain. Motivated by this hypothesis, ATLAS learns a common user--item retrieval space by aligning complementary aspects of source-domain representations and organising the resulting space with shared hierarchical codebooks. The learned model is frozen after source training and performs similarity-based retrieval directly in unseen domains, without target-domain retraining or language-model inference.

We evaluate ATLAS trained on five Amazon domains and
test it on ten unseen domains with disjoint users and items. We compare
against strong sequential, graph-based, cross-domain, universal, and
LLM-based recommenders under standard protocols. ATLAS
consistently improves zero-shot recommendation performance while also
remaining competitive on the source domains. Our analysis further
reveals a pronounced source-domain diversity effect: expanding training
from fewer to more heterogeneous source domains substantially improves
unseen-domain transfer. These findings suggest that transferable
recommendation behaviour can emerge directly from diverse
recommendation environments rather than requiring target-domain
adaptation or language-model pretraining.

Our contributions are:

\begin{itemize}
    \item We formulate \textbf{Recommendation Domain Generalization}, in which a single model trained on multiple source domains is deployed without parameter updates to unseen domains with disjoint users and items.
    
    \item We propose ATLAS, a multi-source retrieval framework that learns transferable user--item representations through complementary alignment and shared hierarchical codebooks.
    
    \item We conduct extensive evaluation over five source and ten unseen target domains, comparing against domain-specific, cross-domain, universal, and LLM-based recommendation methods.
    
    \item We identify a strong \textbf{source-domain diversity effect}, showing that broader source-domain exposure of ATLAS substantially improves its zero-shot recommendation performance.
\end{itemize}

\section{Related Work}
\label{sec:literature-survey}

\textbf{Domain-Specific Recommendation.}
Domain-specific recommenders such as LightGCN, SASRec, and BERT4Rec learn user--item representations from historical interactions within a fixed recommendation environment. However, they assume consistent users, items, and interaction distributions between training and deployment, requiring retraining for new domains.\\
\textbf{Cross-Domain and Universal Recommendation.}
Cross-domain recommendation transfers knowledge between related recommendation domains. Early methods such as CoNet \cite{hu2018conet}, and EMCDR \cite{man2017cross} rely on overlapping users to transfer knowledge across domains, whereas MVDNN \cite{inproceedings} learns a shared latent representation to bridge domain-specific user preferences, while later approaches employ adversarial learning, distribution alignment, and optimal transport (e.g., GWCDR \cite{li2022gromov}, VDEA \cite{liu2022exploiting}, DUP-OT \cite{xiao2026modeling}) to support transfer with limited or non-overlapping users. Nevertheless, target-domain interactions typically remain part of training or adaptation.
Universal recommendation extends transfer beyond a single source--target pair through multi-domain recommendation pretraining. UniSRec \cite{hou2022towards} and ZESRec \cite{ding2021zero} learn transferable sequential or semantic representations, while PreRec \cite{lin2024pre} investigate transferable recommendation priors across multiple datasets. Unlike RDG, these methods are primarily designed for downstream adaptation rather than direct deployment to completely unseen domains using a frozen retrieval model.\\
\textbf{Foundation Models and LLM-based Recommendation.}
Foundation models and LLMs have recently been applied to recommendation through text generation, instruction tuning, prompting, and retrieval augmentation. Representative examples include P5\cite{geng2022recommendation}, TALLRec\cite{bao2023tallrec}, LLMRec \cite{lyu2024llm}, and RecBase \cite{zhou2025recbase}. Unlike these approaches, ATLAS investigates whether transferable recommendation behavior can emerge directly from recommendation-specific representation learning without language-model inference or target-domain adaptation.\\
\textbf{Domain Generalization and Representation Alignment}
Outside recommendation, domain generalization learns representations that generalize to unseen domains without target-domain data. Existing approaches include domain-adversarial learning, invariant representation learning, optimal transport, Gromov--Wasserstein alignment, and vector quantization. Rather than introducing new alignment techniques, ATLAS adapts these established principles to recommendation under the proposed RDG setting. These differences are summarized in Table~\ref{tab:transfer_paradigms} and provides motivation for our RDG setting.

\begin{table}[t]
\small
\centering
\caption{Comparison of recommendation transfer paradigms.}
\label{tab:transfer_paradigms}
\small
\setlength{\tabcolsep}{2.7pt}
\renewcommand{\arraystretch}{1.15}

\begin{tabularx}{\linewidth}{
>{\raggedright\arraybackslash}p{0.22\linewidth}
>{\raggedright\arraybackslash}p{0.30\linewidth}
>{\centering\arraybackslash}p{0.10\linewidth}
>{\centering\arraybackslash}p{0.14\linewidth}
>{\centering\arraybackslash}p{0.18\linewidth}
}
\toprule
\textbf{Paradigm} 
& \textbf{Representative Works} 
& \textbf{Seen} 
& \textbf{Target Adapt.} 
& \textbf{Frozen Zero-Shot} \\
\midrule

Domain-Specific
& LightGCN, SASRec, BERT4Rec
& No
& N/A
& No \\

Cross-Domain
& CoNet, EMCDR, GWCDR
& Yes
& Yes
& No \\

Universal Recommendation
& UniSRec, ZESRec, PrepRec, PreRec
& Partial
& Usually
& Partial \\

Foundation / LLM
& P5, TALLRec, RecBase
& Partial
& Prompt/FT
& Partial \\

\textbf{RDG}
& \textbf{ATLAS}
& \textbf{No}
& \textbf{No}
& \textbf{Yes} \\

\bottomrule
\end{tabularx}
\end{table}

\section{Proposed Model}
\label{sec:proposed-model}
\subsection{Problem Formulation}
Let $\mathcal{D}_{\mathrm{src}}=\{D_1,\ldots,D_K\}$ denote a collection of $K$ heterogeneous source recommendation domains, where each domain $D_k=(U_k,I_k,\mathcal{R}_k)$ consists of a user set $U_k$, an item set $I_k$, and observed user--item interactions $\mathcal{R}_k$. Throughout this work, the users and items are assumed to be disjoint across domains. Our objective is to learn a single recommendation model from $\mathcal{D}_{\mathrm{src}}$ that generalizes directly to any unseen target domain $D_t=(U_t,I_t,\mathcal{R}_t)$, where $D_t\notin\mathcal{D}_{\mathrm{src}}$. Unlike cross-domain recommendation, which assumes the target domain is available during training or adaptation, RDG requires the learned model to remain completely frozen after training on the source domains. At inference, target-user interaction histories are used only to construct user representations for retrieval; they are never used to update model parameters.
\begin{figure*}[t]
\centering
\includegraphics[width=\textwidth]{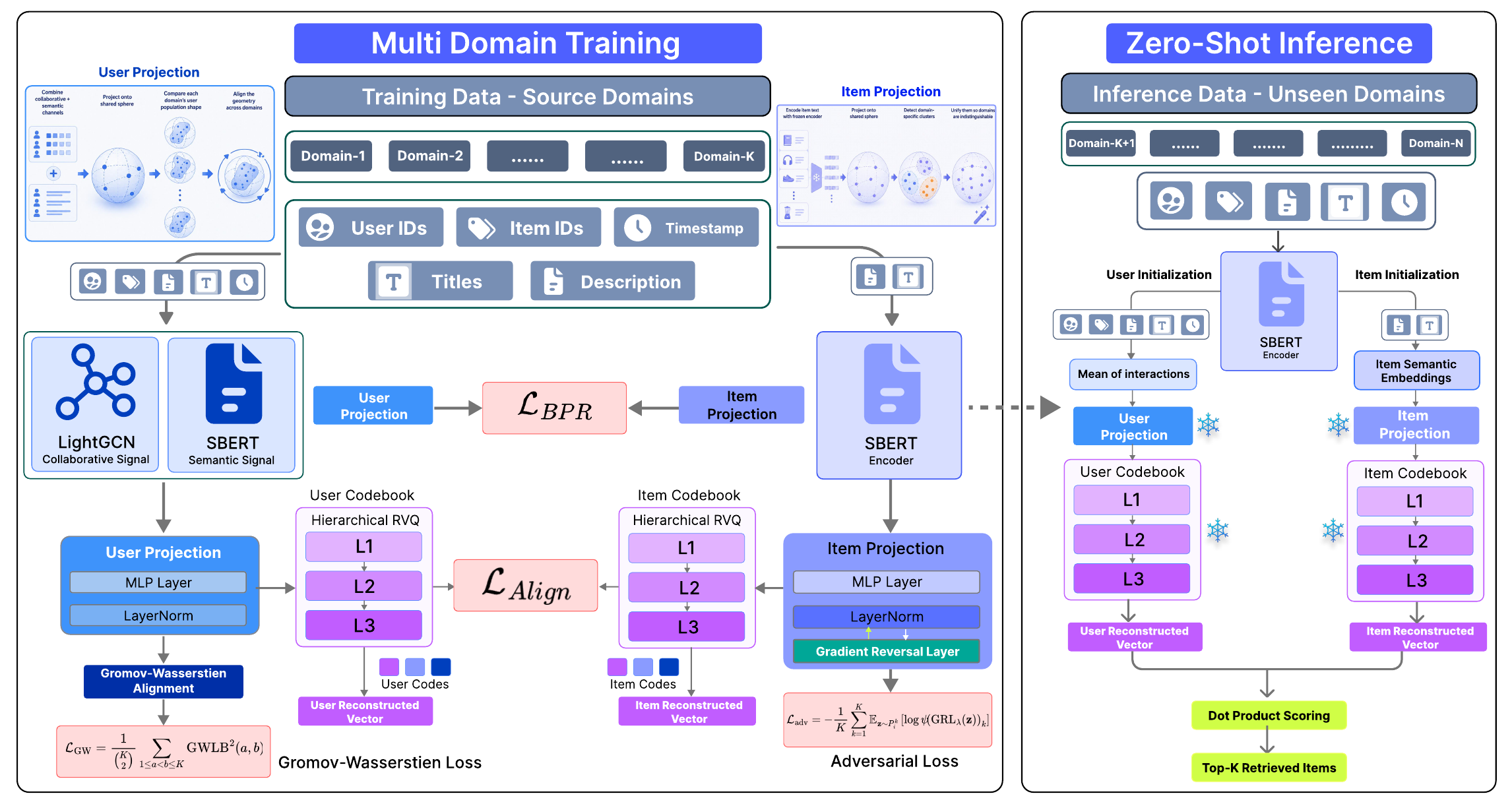}
\caption{ATLAS architecture}
\label{fig:proposed-model-architecture}
\end{figure*}
\subsection{ATLAS Framework}
To address the proposed RDG setting, we introduce ATLAS, a multi-source recommendation framework
that learns a unified retrieval space from multiple heterogeneous source
domains. Rather than relying on target-domain adaptation, ATLAS jointly
aligns item semantics and user interaction geometry into a common latent
space, which is subsequently organized using shared hierarchical
codebooks. The resulting retrieval model is trained exclusively on
source-domain interactions and remains completely frozen during
deployment to unseen recommendation domains. Figure~\ref{fig:proposed-model-architecture} illustrates the overall architecture of ATLAS, while Algorithm~\ref{alg:training} summarizes the training procedure.\\
ATLAS transforms these heterogeneous representations into a shared latent
space through three successive stages:
(i) semantic alignment of item representations,
(ii) geometric alignment of user representations, and
(iii) learning shared hierarchical codebooks for unified retrieval.
The following sections describe each component in detail.

\subsection{Item-Space Unification}
\textbf{Item Representation.} Each item is represented by the concatenation of its title and description and encoded using a frozen Sentence-BERT encoder
$E_{\mathrm{text}}$. Because the encoder is shared across all domains and remains fixed during training, every item is embedded into a common semantic space, $e_i^k = E_{\mathrm{text}}(\text{title},\text{description}).$ \\
\textbf{Shared Item Projection.}
Semantic embeddings are mapped into the recommendation space through a
shared projection network $z_i^k = P_i(e_i^k) / \|P_i(e_i^k)\|_2, $
where $P_i$ is a trainable MLP shared across all source domains.
The $\ell_2$ normalization projects all embeddings onto the unit
hypersphere, ensuring cosine similarity is comparable across domains and
providing a common geometry for subsequent retrieval. \\
\textbf{Adversarial Unification.} A shared space is not yet a \emph{unified} one. Differences in vocabulary, description style, and topical composition make the per-domain item distributions $P^{1}_{i},\dots,P^{K}_{i}$ occupy distinct clusters of $\mathbb{S}^{d-1}$, and an unseen domain then lands in its own cluster, where the scorer has received no training signal. We collapse these clusters by making the domain of an item statistically unrecoverable from its embedding\cite{ben2010theory}. A discriminator $\psi:\mathbb{S}^{d-1}\!\to\!\Delta^{K-1}$ predicts the domain, and $P_i$ is trained against it:
\begin{equation}
\min_{\theta_{\psi}}\ \max_{\theta_{P_i}}\
-\frac{1}{K}\sum\nolimits_{k=1}^{K}\mathbb{E}_{\mathbf{z}\sim P^{k}_{i}}\left[\log\psi(\mathbf{z})_{k}\right].
\label{eq:minimax}
\end{equation}
We realize Eq.~\eqref{eq:minimax} with a gradient reversal layer (GRL) \cite{pmlr-v37-ganin15} placed between $P_i$ and $\psi$, which is the identity in the forward pass and has Jacobian $-\lambda\mathbf{I}$ in the backward pass. A single loss then trains $\psi$ by descent and $P_i$ by scaled ascent:
\begin{equation}
\mathcal{L}_{\mathrm{adv}}
=-\frac{1}{K}\sum\nolimits_{k=1}^{K}
\mathbb{E}_{\mathbf{z}\sim P^{k}_{i}}
\left[\log\psi\!\left(\mathrm{GRL}_{\lambda}(\mathbf{z})\right)_{k}\right].
\label{eq:adv}
\end{equation}
At convergence, embeddings become difficult to distinguish by domain, encouraging the shared projection to preserve recommendation-relevant semantics while discarding domain-specific information. Full theoretical derivation is provided in Appendix.
\subsection{User-Space Unification}
\textbf{User Representation.} Each user is represented by complementary collaborative and semantic signals. The collaborative representation is learned using a LightGCN encoder over the bipartite interaction graph, $e_{u,c}^k=\mathrm{LightGCN}(G^k,u), $
while the semantic representation is obtained by mean-pooling the SBERT
embeddings of the user's interacted items, $e_{u,s}^k=\sum_{i\in\mathcal H_u^k} e_i^k/|\mathcal H_u^k|.$ Then, the two representations are concatenated,
$e_u^k=[e_{u,c}^k;e_{u,s}^k]. $   The collaborative component captures higher-order interaction structure,
whereas the semantic component provides transferable information that
remains available in unseen domains.\\
\textbf{Shared User Projection.} User representations are mapped into the common retrieval space through a shared projection network, $z_u^k = P_u(e_u^k) / \|P_u(e_u^k)\|_2, $  where \(P_u\) is shared across all source domains. Domain-specific collaborative encoders remain outside the shared latent space and are discarded after training, allowing only the transferable projection to generalize to unseen domains. \\
\textbf{Geometric Alignment.} Unlike item representations, aligning user populations cannot rely on domain-adversarial learning as users across domains share neither identities nor one-to-one correspondence~\cite{tang2023robust}.
Instead, transferable user structure is reflected in the geometry of each
population rather than its absolute location in the latent space. We therefore align user populations by minimizing the Gromov--Wasserstein lower bound (GWLB)\cite{Mmoli2011}, which compares the internal pairwise distance structure of two empirical distributions without requiring explicit correspondences.

For a mini-batch of $n$ users from domain $k$ we form the cosine-distance matrix $D^{k}_{ab}=1-\langle\mathbf{z}^{k}_{u,a},\mathbf{z}^{k}_{u,b}\rangle$ and collect its $N=\binom{n}{2}$ strictly upper-triangular entries into the empirical distance measure $\mu^{k}=\tfrac{1}{N}\sum_{a<b}\delta_{D^{k}_{ab}}$. Since $\mu^{a}$ and $\mu^{b}$ are one-dimensional empirical measures of equal size, their squared $2$-Wasserstein distance reduces to a sum over order statistics\cite{peyre2019computational}:
{\small
\begin{equation*}
\mathrm{GWLB}^{2}(a,b)=W_2^{2}(\mu^{a},\mu^{b})
=\frac{1}{N}\sum\nolimits_{r=1}^{N}\!\left(d^{a}_{(r)}-d^{b}_{(r)}\right)^{2},
\label{eq:gwlb}
\end{equation*}}
where $d^{k}_{(r)}$ is the $r$-th smallest pairwise distance in domain $k$. Averaging over domain pairs gives the user-unification loss:
{\small 
\begin{equation}
\mathcal{L}_{\mathrm{gw}}
=\frac{1}{\binom{K}{2}}\sum\nolimits_{1\le a<b\le K}\mathrm{GWLB}^{2}(a,b).
\label{eq:lgw}
\end{equation}}
Eq.~\eqref{eq:lgw} is differentiable and inexpensive to evaluate, making it
practical for mini-batch optimization.
Unlike adversarial alignment, GWLB constrains the intrinsic geometry of
each user population rather than individual embeddings, encouraging
domain-invariant interaction structure while preserving recommendation
semantics. Due to space constraints, detailed background on the adversarial and Gromov--Wasserstein (GW) losses; including full derivations from \cite{ben2010theory} and \cite{Mmoli2011}, is provided in the Appendix. \\

\begin{algorithm}[t]
\caption{Training \textsc{ATLAS}}
\label{alg:training}
\begin{algorithmic}[1]
\REQUIRE Source domains $\{\mathcal{D}^k\}_{k=1}^{K}$; frozen text encoder $E_{\mathrm{text}}$
\ENSURE Projections $P_u, P_i$; codebooks $\{C_\ell\}_{\ell=1}^{L}$

\STATE Compute $e_i^k \leftarrow E_{\mathrm{text}}(t_i)$ for all $i \in \mathcal{I}^k$, $\forall k$
\STATE Pretrain LightGCN on each $\mathcal{G}^k$ to obtain $e_{u,c}^k$ for all $u \in \mathcal{U}^k$

\FOR{each training iteration}
    \STATE Sample $(u, i^+, i^-)$ from each domain $k$; \\ form $e_u^k \leftarrow [e_{u,c}^k\,;\,e_{u,s}^k]$
    \STATE Project and normalize to obtain $z_u^k, z_i^k$ 
    \STATE Compute $\mathcal{L}_{\mathrm{bpr}}$ on $z_u^k, z_i^k$ \eqref{eq:rec};\; $\mathcal{L}_{\mathrm{adv}}$ on $z_i^k$ via GRL \eqref{eq:adv};\; $\mathcal{L}_{\mathrm{gw}}$ on $z_u^k$\ \eqref{eq:lgw}
    \STATE Quantize $z_u^k, z_i^k$ to obtain $\hat{z}_u, \hat{z}_i$ and compute $\mathcal{L}_{\mathrm{rvq}}$   \hfill \eqref{eq:recon} \eqref{eq:rvq}
    \STATE Update $\theta$ by minimizing $\mathcal{L}$ (incl.\ $\mathcal{L}_{\mathrm{aux}}$) \hfill \eqref{eq:total-loss}
\ENDFOR
\STATE freeze $P_u, P_i, \{C_\ell\}$
\end{algorithmic}
\end{algorithm} 
\subsection{Discretizing the Unified Space via RVQ Codebooks}
\label{sec:rvq} 
We discretize the unified space with residual vector quantization (RVQ)~\cite{rajput2023recommender}, using separate quantizers for users and items. RVQ assigns each embedding a hierarchical code and acts as an information bottleneck that removes residual domain-specific variation. Each quantizer has $L$ levels $C_1,\ldots,C_L$, where level $\ell$ holds $M_\ell$ learnable codewords of dimension $d$. Given $\mathbf{z}\in\mathbb{S}^{d-1}$, codes are assigned by quantizing successive residuals ($\mathbf{r}_0=\mathbf{z}$):
\begin{equation*}
c_\ell=\arg\min_{m\le M_\ell}
\lVert\mathbf{r}_{\ell-1}-C_\ell[m]\rVert_2^2,
\quad
\mathbf{r}_\ell=\mathbf{r}_{\ell-1}-C_\ell[c_\ell],
\end{equation*}
yielding the code $\mathbf{c}(\mathbf{z})=(c_1,\ldots,c_L)$ and the $\ell_2$-normalized reconstruction
{\small
\begin{equation}
\hat{\mathbf{z}}
= \Bigl(\sum_\ell C_\ell[c_\ell]\Bigr)
 \Big/
 \Bigl\lVert\sum_\ell C_\ell[c_\ell]\Bigr\rVert_2.
\label{eq:recon}
\end{equation}}
Gradients flow through the quantizer via soft Sinkhorn assignments~\cite{NIPS2013_af21d0c9} during training; hard nearest-codeword assignment is used at inference. \\
\textbf{Quantization losses.}
Following VQ-VAE~\cite{van2017neural}, we use commitment and codebook losses (with $\mathrm{sg}(\cdot)$ denoting stop-gradient) to keep the encoder and codewords mutually consistent:
\begin{equation*}
\mathcal{L}_{\mathrm{commit}}
=\lVert\mathbf{z}-\mathrm{sg}(\hat{\mathbf{z}})\rVert_2^2,
\qquad
\mathcal{L}_{\mathrm{book}}
=\lVert\mathrm{sg}(\mathbf{z})-\hat{\mathbf{z}}\rVert_2^2.
\end{equation*}
A per-level entropy penalty on batch assignment frequencies
$p^{(\ell)}_m$ prevents codebook collapse:
$\mathcal{L}_{\mathrm{ent}}
=
\sum\nolimits_{\ell}\frac{1}{\log M_\ell}
\sum\nolimits_{m} p^{(\ell)}_m \log p^{(\ell)}_m$.
\\
The combined quantization loss is
\begin{equation}
\mathcal{L}_{\mathrm{rvq}}
=\lambda_c\mathcal{L}_{\mathrm{commit}}
+\lambda_b\mathcal{L}_{\mathrm{book}}
+\lambda_e\mathcal{L}_{\mathrm{ent}}.
\label{eq:rvq}
\end{equation}
\textbf{Recommendation loss.} We anchor the space to ranking quality with a sampled-softmax loss. For each positive pair $(u,i^+)$ from domain $k$, we sample $J$ negatives $\{i^-_j\}_{j=1}^J$ from the same domain (masking the user's observed items):
{\small
\begin{equation}
\mathcal{L}_{\mathrm{bpr}}
=-\,\mathbb{E}_{(u,i^{+})\sim\mathcal{R}^{k}}
\log\frac{\exp(\langle\mathbf{z}_u,\mathbf{z}_{i^{+}}\rangle)}{\exp(\langle\mathbf{z}_u,\mathbf{z}_{i^{+}}\rangle)+\sum_{j}\exp(\langle\mathbf{z}_u,\mathbf{z}_{i^{-}_{j}}\rangle)}.
\label{eq:rec}
\end{equation}}
Since all negative samples are drawn exclusively from same domain, cross-domain alignment is driven entirely by $\mathcal{L}_{\mathrm{adv}}$ and $\mathcal{L}_{\mathrm{gw}}$. \\
\textbf{Auxiliary stabilizers.} Adversarial training on a compact manifold is prone to oscillation, so we add three lightweight regularizers: $\mathcal{L}_{\mathrm{centroid}}$ stabilizes per-domain user centroids across steps, $\mathcal{L}_{\mathrm{div}}$ penalizes excessive within-domain user similarity, and $\mathcal{L}_{\mathrm{var}}$ matches per-dimension variance across domains. Their weighted sum is $\mathcal{L}_{\mathrm{aux}}$. These are stabilizers rather than contributions.\\
\begin{table}[t]
\centering
\scriptsize
\setlength{\tabcolsep}{1pt}
\caption{In-domain performance on the five source domains.}
\label{tab:in_domain}
\begin{tabular}{l | cc | cc | cc | cc | cc}
\toprule
& \multicolumn{2}{c|}{\textbf{Beauty}} & \multicolumn{2}{c|}{\textbf{Automotive}} & \multicolumn{2}{c|}{\textbf{Movies \& TV}} & \multicolumn{2}{c|}{\textbf{Video Games}} & \multicolumn{2}{c}{\textbf{Electronics}} \\
\cmidrule(lr){2-3} \cmidrule(lr){4-5} \cmidrule(lr){6-7} \cmidrule(lr){8-9} \cmidrule(lr){10-11}
\textbf{Model} & HR & NDCG & HR & NDCG & HR & NDCG & HR & NDCG & HR & NDCG \\
\midrule
BERT4Rec & .0021 & .0012 & .0013 & .0009 & .0073 & .0042 & .0068 & .0033 & .0024 & .0011 \\
SASRec & .0044 & .0019 & .0014 & .0006 & .0029 & .0013 & .0098 & .0054 & .0022 & .0008 \\
LightGCN & .0008 & .0005 & .0006 & .0004 & .0131 & .0072 & .0182 & .0091 & .0017 & .0009 \\
PrepRec & .0140 & .0071 & .0233 & .0138 & .0221 & .0116 & .0342 & .0201 & .0221 & .0128 \\
LLM-RecG & .0257 & .0150 & .0222 & .0113 & .0357 & .0240 & .0620 & .0412 & .0199 & .0102 \\
UniSRec & \underline{.0141} & \underline{.0073} & \underline{.0127} & \underline{.0068} & \underline{.0770} & \underline{.0041} & \underline{.1439} & \underline{.1035} & \underline{.0213} & \underline{.0112} \\
\midrule
ATLAS & \textbf{.0870} & \textbf{.0490} & \textbf{.0532} & \textbf{.0332} & \textbf{.1388} & \textbf{.0774} & \textbf{.2116} & \textbf{.1561} & \textbf{.0687} & \textbf{.0443} \\
\bottomrule
\end{tabular}
\end{table}
\textbf{Overall objective.} Training minimizes
\begin{multline}
\label{eq:total-loss}
    \mathcal{L}
    =\lambda_{\mathrm{bpr}}\mathcal{L}_{\mathrm{bpr}}
    +\lambda_{\mathrm{adv}}\mathcal{L}_{\mathrm{adv}}\\ 
    +\lambda_{\mathrm{gw}}\mathcal{L}_{\mathrm{gw}}
    +\lambda_{\mathrm{rvq}}\mathcal{L}_{\mathrm{rvq}}
    +\lambda_{\mathrm{aux}}\mathcal{L}_{\mathrm{aux}},
\end{multline}
whose terms are complementary: $\mathcal{L}_{\mathrm{bpr}}$ makes the space predictive, $\mathcal{L}_{\mathrm{adv}}$ makes item marginals domain-indistinguishable, $\mathcal{L}_{\mathrm{gw}}$ encourages source domains to exhibit similar distributions of within-domain user distances, and $\mathcal{L}_{\mathrm{rvq}}$ makes the space discretizable. Optimizing Eq.~\eqref{eq:total-loss} completes model training; the LightGCN encoders and the discriminator $\psi$ are then discarded, and the projections $P_u,\, P_i$ and codebooks $\{C_{\ell}\}$ are frozen. Full hyperparameters, training infrastructure, and implementation details are provided in Appendix.
\subsection{Zero-Shot Inference on an Unseen Domain}
\label{sec:inference}
Given the target domain $\mathcal{D}_{\mathrm{new}}$, inference proceeds in three steps with no parameter update. \\
\textbf{Projection with frozen encoders.}
LightGCN embeddings cannot be obtained for $\mathcal{D}_{\mathrm{new}}$:
they are graph-specific, requiring newly learned embedding tables for
disjoint target users and items, which would violate RDG's
frozen-model constraint. Users are therefore initialized through the
semantic channel alone; this mean-pooled vector serves only as an
initialization, with the frozen projection $P_u$ and codebooks mapping
it into the shared retrieval space (modality-dropout ablation in
Appendix). Projection vectors are computed as follows: $\mathbf{z}^{\mathrm{new}}_{i}=\frac{P_i(E_{\mathrm{text}}(t_i))}{\lVert P_i(E_{\mathrm{text}}(t_i))\rVert_2}$ and 
$\mathbf{z}^{\mathrm{new}}_{u}=\frac{P_u\!\left(\left[\mathbf{0}\,;\,\mathbf{e}^{\mathrm{new}}_{u,\mathrm{sem}}\right]\right)}{\lVert P_u\!\left(\left[\mathbf{0}\,;\,\mathbf{e}^{\mathrm{new}}_{u,\mathrm{sem}}\right]\right)\rVert_2},$
where $\mathbf{e}^{\mathrm{new}}_{u,\mathrm{sem}}$ mean-pools the SBERT embeddings of the target user's history $\mathcal{H}^{\mathrm{new}}_{u}$. \\
\textbf{Quantization and reconstruction.} Both embeddings are
passed through the frozen codebooks with hard nearest-code
assignment, yielding semantic identifiers $c(z^{\mathrm{new}}_{u}), c(z^{\mathrm{new}}_{i})$
and the reconstructed vectors
{\small
\begin{equation*}
\hat{z}^{\mathrm{new}}_{u}
=\Pi\!\left(\sum_{\ell=1}^{L}C^{u}_{\ell}\!\left[c_{\ell}(z^{\mathrm{new}}_{u})\right]\right), \, \,
\hat{z}^{\mathrm{new}}_{i}
= \Pi\!\left(\sum_{\ell=1}^{L}C^{i}_{\ell}\!\left[c_{\ell}(z^{\mathrm{new}}_{i})\right]\right),
\end{equation*}}
where $\Pi(\cdot)$ denotes $\ell_2$-normalization, matching the
reconstruction in Eq.~\eqref{eq:recon}. Because the codebooks were
fitted on the unified space, they tokenize the unseen domain without
adaptation; reconstruction additionally snaps target embeddings onto
the source-supported region of the sphere, discarding any residual
target-specific direction the projections may have retained.\\
\textbf{Top-$K$ ranking.} Candidates are scored by the inner product of the reconstructed vectors, $s(u,i)=\big\langle\hat{\mathbf{z}}^{\mathrm{new}}_{u},\,\hat{\mathbf{z}}^{\mathrm{new}}_{i}\big\rangle,$
and the $K$ highest-scoring items are returned. As both vectors are unit-norm, retrieval reduces to maximum inner-product search over a codebook-indexed item set and is served by standard ANN structures. Since no target interaction is used for any parameter update or model selection, and no user or item identity is shared with the source domains, this constitutes strict zero-shot transfer.

\begin{table*}[t]
\centering
\scriptsize
\setlength{\tabcolsep}{1.8pt}
\renewcommand{\arraystretch}{1.15}
\caption{Zero-shot and fine-tuned recommendation performance on ten unseen target domains. Absolute values are low due to
full-ranking retrieval over the complete item catalogue (up to
hundreds of thousands of items); this holds uniformly across all
methods, so relative comparisons remain valid (dataset statistics in
Appendix).}
\label{tab:zero_shot}
\begin{tabular}{l | cc | cc | cc | cc | cc | cc | cc | cc | cc | cc}
\toprule
 & \multicolumn{2}{c|}{\textbf{Dig. Music}} 
 & \multicolumn{2}{c|}{\textbf{Cell Phones}} 
 & \multicolumn{2}{c|}{\textbf{Arts \& Crafts}} 
 & \multicolumn{2}{c|}{\textbf{Baby Prod.}} 
 & \multicolumn{2}{c|}{\textbf{Books}} 
 & \multicolumn{2}{c|}{\textbf{Music Instr.}} 
 & \multicolumn{2}{c|}{\textbf{Office Prod.}} 
 & \multicolumn{2}{c|}{\textbf{Appliances}} 
 & \multicolumn{2}{c|}{\textbf{Software}} 
 & \multicolumn{2}{c}{\textbf{Pet Supplies}} \\
\cmidrule(lr){2-3} \cmidrule(lr){4-5} \cmidrule(lr){6-7} \cmidrule(lr){8-9} \cmidrule(lr){10-11} \cmidrule(lr){12-13} \cmidrule(lr){14-15} \cmidrule(lr){16-17} \cmidrule(lr){18-19} \cmidrule(lr){20-21}
\textbf{Model} & HR & NDCG & HR & NDCG & HR & NDCG & HR & NDCG & HR & NDCG & HR & NDCG & HR & NDCG & HR & NDCG & HR & NDCG & HR & NDCG \\

\midrule
ZESRec 
 & .0102 & .0059 & .0002 & .0001 & .0022 & .0014 & .0017 & .0009 & .0013 & .0006 & .0016 & .0008 & .0010 & .0005 & .0024 & .0010 & .0014 & .0007 & .0005 & .0002 \\
VQ-Rec 
 & .0113 & .0053 & \underline{.0013} & \underline{.0009} & \underline{.0043} & \underline{.0031} & \underline{.0029} & \underline{.0018} & .0017 & .0009 & \underline{.0039} & \underline{.0027} & \underline{.0013} & \underline{.0013} & \underline{.0092} & \underline{.0047} & .0038 & .0018 & \textbf{.0019} & \textbf{.0015} \\
LLM-RecG 
 & .0131 & .0077 & .0011 & .0007 & .0008 & .0005 & .0019 & .0011 & .0023 & .0014 & .0031 & .0020 & .0017 & .0009 & .0067 & .0042 & \underline{.0053} & \underline{.0029} & .0009 & .0005 \\
UniSRec$_\text{ZS}$ 
 & \textbf{.0206} & \textbf{.0103} & .0009 & .0005 & .0039 & .0021 & .0028 & .0014 & \underline{.0029} & .0012 & .0028 & .0015 & \textbf{.0030} & \textbf{.0016} & .0078 & .0043 & .0049 & .0026 & \underline{.0011} & \underline{.0005} \\
ATLAS$_\text{ZS}$ (ours)
 & \underline{.0173} & \underline{.0089} & \textbf{.0023} & \textbf{.0012} & \textbf{.0067} & \textbf{.0035} & \textbf{.0042} & \textbf{.0022} & \textbf{.0055} & \textbf{.0027} & \textbf{.0061} & \textbf{.0038} & .0026 & .0012 & \textbf{.0113} & \textbf{.0059} & \textbf{.0073} & \textbf{.0043} & .0009 & .0003 \\

\midrule
GWCDR
 & .0164 & .0091 & .0009 & .0005 & .0011 & .0009 & .0017 & .0010 & .0011 & .0007 & .0010 & .0007 & .0006 & .0004 & .0045 & .0029 & .0063 & .0038 & .0007 & .0004 \\
PrepRec 
 & .0191 & .0092 & .0031 & .0021 & .0028 & .0017 & .0050 & .0025 & \underline{.0058} & \underline{.0032} & .0076 & .0040 & .0040 & .0024 & .0085 & .0042 & .0096 & .0051 & .0096 & .0059 \\
UniSRec$_\text{FT}$ 
 & \underline{.0291} & \underline{.0140} & .0030 & .0017 & \underline{.0051} & \underline{.0031} & \underline{.0055} & \underline{.0027} & .0055 & .0029 & \underline{.0192} & \underline{.0097} & \underline{.0121} & \textbf{.0073} & \underline{.0168} & \underline{.0081} & \underline{.0484} & \underline{.0251} & \underline{.0112} & \underline{.0058} \\
ATLAS$_\text{FT}$ (ours)
 & \textbf{.0297} & \textbf{.0152} & \textbf{.0075} & \textbf{.0037} & \textbf{.0094} & \textbf{.0051} & \textbf{.0077} & \textbf{.0038} & \textbf{.0103} & \textbf{.0049} & \textbf{.0331} & \textbf{.0176} & \textbf{.0144} & \underline{.0070} & \textbf{.0442} & \textbf{.0242} & \textbf{.0601} & \textbf{.0411} & \textbf{.0197} & \textbf{.0099} \\
\bottomrule
\end{tabular}
\end{table*}

\section{Experiments}
\label{sec:experiments}

\begin{table*}[t]
\centering
\scriptsize
\setlength{\tabcolsep}{2.5pt}

\begin{minipage}[t]{0.48\textwidth}
\centering
\caption{Ablation on alignment objectives on target domains.}
\label{tab:ablation_loss}
\vspace{2pt}
\begin{tabular}{l | cccccccccc}
\toprule
\textbf{Configuration}
& \textbf{Cell}
& \textbf{Musi.}
& \textbf{Arts}
& \textbf{Baby}
& \textbf{Book}
& \textbf{Inst.}
& \textbf{Offi.}
& \textbf{Appl.}
& \textbf{Soft.}
& \textbf{Pet} \\
\midrule
BPR only
& .0004 & .0061 & .0018 & .0014 & .0020
& .0029 & .0011 & .0041 & .0021 & .0004 \\

BPR + $\mathcal{L}_\mathrm{adv}$
& .0002 & .0072 & .0053 & .0009 & .0042
& .0044 & .0020 & .0066 & .0061 & .0007 \\

BPR + $\mathcal{L}_\mathrm{GW}$
& .0009 & .0107 & .0021 & .0032 & .0033
& .0041 & .0015 & .0091 & .0036 & .0004 \\
\midrule

ATLAS (full)
& \textbf{.0023}
& \textbf{.0173}
& \textbf{.0067}
& \textbf{.0042}
& \textbf{.0055}
& \textbf{.0061}
& \textbf{.0026}
& \textbf{.0113}
& \textbf{.0073}
& \textbf{.0009} \\
\bottomrule
\end{tabular}
\end{minipage}%
\hfill%
\begin{minipage}[t]{0.48\textwidth}
\centering
\caption{Source-domain diversity effect on target domains.}
\label{tab:source_diversity}
\vspace{2pt}
\begin{tabular}{c | cccccccccc}
\toprule
\textbf{$K$}
& \textbf{Cell}
& \textbf{Musi.}
& \textbf{Arts}
& \textbf{Baby}
& \textbf{Book}
& \textbf{Inst.}
& \textbf{Offi.}
& \textbf{Appl.}
& \textbf{Soft.}
& \textbf{Pet} \\
\midrule
2
& .0005 & .0071 & .0011 & .0005 & .0004
& .0008 & .0001 & .0027 & .0013 & .0004 \\

3
& .0009 & .0112 & .0020 & .0014 & .0016
& .0017 & .0007 & .0057 & .0028 & .0002 \\

4
& .0015 & .0136 & .0040 & .0034 & .0038
& .0041 & .0014 & .0072 & .0056 & .0007 \\

5
& \textbf{.0023}
& \textbf{.0173}
& \textbf{.0067}
& \textbf{.0042}
& \textbf{.0055}
& \textbf{.0061}
& \textbf{.0026}
& \textbf{.0113}
& \textbf{.0073}
& \textbf{.0009} \\
\bottomrule
\end{tabular}
\end{minipage}

\end{table*}

We design our experiments to answer four research questions:
\textbf{RQ1}: Does ATLAS generalize to entirely unseen domains without target-domain adaptation, and how does it compare against sequential, graph-based, cross-domain, pre-training, universal, quantization-based, and LLM-based baselines? 
\textbf{RQ2}: Is each alignment objective necessary? 
\textbf{RQ3}: Does codebook-reconstructed $\mathbf{z}_\mathrm{rec}$ outperform the continuous projection $\mathbf{z}_\mathrm{proj}$? 
\textbf{RQ4}: How does source-domain diversity affect zero-shot transfer?
\subsection{Experimental Setup}
\label{sec:setup}
\textbf{Datasets.} We use 15 domains from the Amazon Reviews 2023 dataset,~\cite{hou2024bridging} partitioned into five \emph{source} domains for training and ten held-out \emph{target} domains for zero-shot evaluation. Users with fewer than five interactions are removed. Evaluation follows full-ranking leave-one-out.
Detailed dataset statistics are provided in Appendix. \\
\textbf{Training and Evaluation Protocol.}
We adopt the leave-one-out split: the last interaction per user is the test item, and the rest are training. All evaluations use \emph{full ranking} against the entire candidate pool without negative sampling. At zero-shot evaluation, the frozen ATLAS model is applied to each target domain without parameter updates. 
Full training configuration is provided in Appendix. \\
\textbf{Metrics.} For retrieval-based baselines, we report HR@10 and NDCG@10 under full ranking. For LLM-based binary scoring baselines producing pointwise decisions, we report AUC and HR@1 following their original protocols.\\
\textbf{Baselines.} \emph{Same-domain}: BERT4Rec, SASRec, LightGCN, PrepRec, LLM-RecG~\cite{li2025llm}. 
\emph{Unseen domains}: ZESRec, VQ Rec~\cite{hou2023vqrec}, PrepRec, GWCDR\cite{li2022gromov} , UniSRec, and ATLAS. 
\emph{LLM binary scoring}: LLMRec, TALLRec and BigRec \cite{10.1145/3716393}.

\begin{table*}[t]
\centering
\caption{Performance comparison of ATLAS against LLM baselines.}
\label{tab:model_results}
\small
\setlength{\tabcolsep}{1.2pt}
\begin{tabular}{l | cc | cc | cc | cc | cc | cc | cc | cc | cc | cc}
\toprule
& \multicolumn{2}{c|}{\textbf{Musi.}} & \multicolumn{2}{c|}{\textbf{Cell}} & \multicolumn{2}{c|}{\textbf{Book}} & \multicolumn{2}{c|}{\textbf{Arts}} & \multicolumn{2}{c|}{\textbf{Baby}} & \multicolumn{2}{c|}{\textbf{Pets}} & \multicolumn{2}{c|}{\textbf{Appl.}} & \multicolumn{2}{c|}{\textbf{Offi.}} & \multicolumn{2}{c|}{\textbf{Soft.}} & \multicolumn{2}{c}{\textbf{Inst.}} \\
\cmidrule(lr){2-3} \cmidrule(lr){4-5} \cmidrule(lr){6-7} \cmidrule(lr){8-9} \cmidrule(lr){10-11} \cmidrule(lr){12-13} \cmidrule(lr){14-15} \cmidrule(lr){16-17} \cmidrule(lr){18-19} \cmidrule(lr){20-21}
\textbf{Model} & \textbf{AUC} & \textbf{HR1} & \textbf{AUC} & \textbf{HR1} & \textbf{AUC} & \textbf{HR1} & \textbf{AUC} & \textbf{HR1} & \textbf{AUC} & \textbf{HR1} & \textbf{AUC} & \textbf{HR1} & \textbf{AUC} & \textbf{HR1} & \textbf{AUC} & \textbf{HR1} & \textbf{AUC} & \textbf{HR1} & \textbf{AUC} & \textbf{HR1} \\
\midrule
LLMRec. & .55 & .13 & .51 & .09 & .51 & .11 & .55 & .10 & .55 & .13 & .53 & .12 & .58 & .13 & .56 & .10 & .57 & .12 & .55 & .12 \\
BigRec    & .67 & .19 & .61 & .15 & .64 & .13 & .64 & .15 & .62 & .14 & .61 & .13 & .61 & .19 & .62 & .14 & .64 & .19 & \textbf{.66} & \textbf{.17} \\
TallRec   & .60 & .15 & .58 & .13 & .64 & .12 & .61 & .14 & \textbf{.62} & \textbf{.15} & \textbf{.69} & \textbf{.17} & .64 & .17 & \textbf{.67} & \textbf{.15} & .64 & .17 & .63 & .16 \\
ATLAS  & \textbf{.71} & \textbf{.26} & \textbf{.68} & \textbf{.18} & \textbf{.67} & \textbf{.17} & \textbf{.71} & \textbf{.20} & .57 & .11 & .66 & .15 & \textbf{.67} & \textbf{.25} & .63 & .13 & \textbf{.69} & \textbf{.22} & \textbf{.66} & .16 \\
\bottomrule
\end{tabular}
\end{table*}

\begin{figure}[t]
  \centering
  \begin{minipage}[t]{0.49\linewidth}
    \centering
    \includegraphics[width=\linewidth]{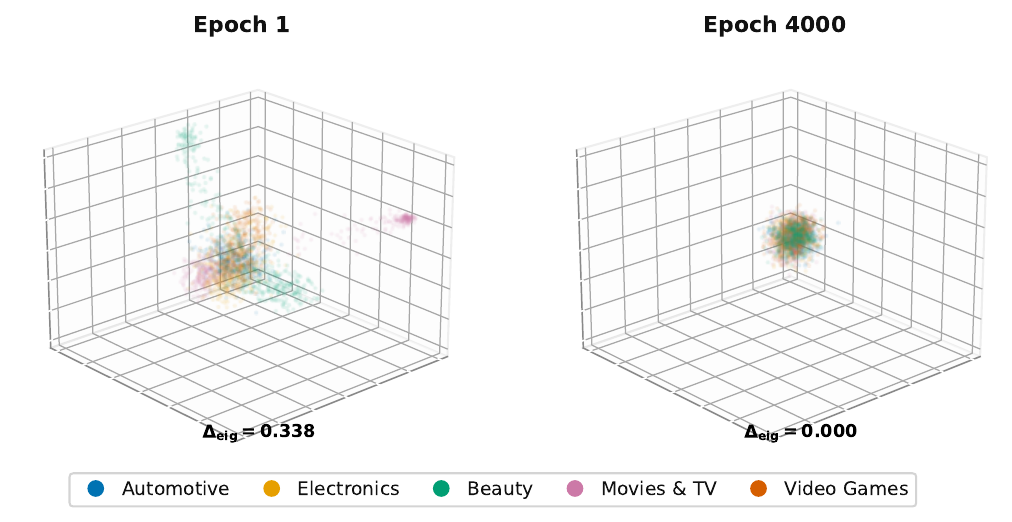}
    \caption{User embedding alignment}
    \label{fig:user_alignment}
  \end{minipage}
  \hfill
    \begin{minipage}[t]{0.49\linewidth}
    \centering
    \includegraphics[width=\linewidth]{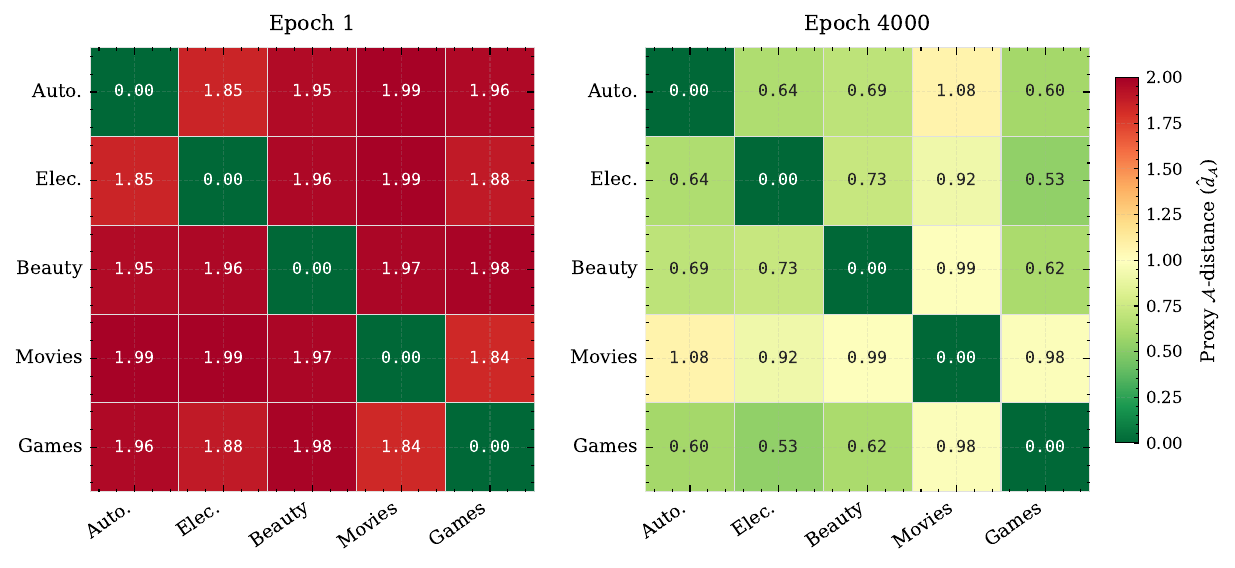}
    \caption{PAD across source domains}
    \label{fig:pad_matrix}
  \end{minipage}
\end{figure}
\subsection{RQ1: Generalization to unseen domains}
\label{sec:rq1}
\textbf{In-Domain Performance (Source Domains).}
Table~\ref{tab:in_domain} reports performance on the five source domains. Since SASRec, BERT4Rec, and related baselines were designed for domain-specific recommendation, we adapt them to the RDG setting by jointly training on a unified interaction pool obtained by merging the five source domains~\cite{10.1145/3715099}.
First, ATLAS consistently achieves the strongest overall performance, demonstrating that enforcing domain-invariant representations does not compromise recommendation accuracy on the training domains.
Second, the strongest competing baseline is generally UniSRec, indicating that large-scale recommendation pre-training provides a strong foundation for transfer.
Finally, the improvement over domain-specific methods such as SASRec and LightGCN confirms that jointly exploiting multiple heterogeneous recommendation domains produces representations that remain competitive even within the seen domains. \\
\textbf{Zero-Shot Transfer (Unseen Target Domains).} 
Table~\ref{tab:zero_shot} presents zero-shot evaluation on ten
completely unseen domains. ATLAS$_\text{ZS}$ outperforms all
zero-shot baselines on seven of ten target domains, and remains
competitive on the remaining three.\footnote{Wilcoxon tests: ATLAS$_\text{ZS}$ beats all
zero-shot baselines ($p<0.05$) except UniSRec$_\text{ZS}$ ($p{=}0.13$,
$p{=}0.05$); ATLAS$_\text{FT}$ beats all fine-tuned baselines
($p<0.01$). Seed runs in Appendix.}
This demonstrates that the learned retrieval space generalizes to unseen domains without target-domain adaptation, though the advantage is not uniform across all domains.
For reference, we additionally report fine-tuned variants: UniSRec$_\text{FT}$
and ATLAS$_\text{FT}$ are further fine-tuned on each target domain.
GWCDR, PrepRec, and LLM-RecG are pairwise cross-domain methods, trained
separately for each (source, target) domain pair; for each target domain
we report the average over its five source-domain pairings.
Here, ATLAS$_\text{FT}$ achieves the strongest overall performance among them.
The zero-shot performance of ATLAS therefore supports the central hypothesis of this work: recommendation knowledge learned from multiple heterogeneous domains can transfer effectively to most of the evaluated unseen domains, without requiring target-domain adaptation.  \\
\textbf{Comparison with LLM-based Binary Scoring.}
Table~\ref{tab:model_results} compares ATLAS against LLM-based
recommenders under their native binary-scoring protocol: since LLM
context length cannot accommodate full-catalogue ranking, all methods; including ATLAS, are evaluated over a reduced pool of 20
items (1 positive + 19 negatives). ATLAS achieves higher AUC
and HR@1 on most domains despite relying solely on
recommendation-specific representation learning, without
language-model inference at deployment.
\subsection{RQ2: Ablation on Alignment Objectives}
\label{sec:rq2}
Table~\ref{tab:ablation_loss} shows target-domain results for each
alignment configuration (full results including source-domain
performance are provided in Appendix). Either
alignment objective improves zero-shot recommendation over BPR alone,
while their combination performs best across the target domains. The
gains are complementary: adversarial alignment reduces item-domain
separability, whereas Gromov--Wasserstein alignment encourages
comparable user-population geometry, as examined further in the
representational analysis. The full model therefore benefits from
addressing both item-side and user-side domain shift.
\subsection{RQ3: Codebook vs.\ Projected Embeddings}
\label{sec:rq3}
To isolate the discrete bottleneck's contribution, we train two
variants: ATLAS without RVQ (scored via $z_{\mathrm{proj}}$)
and full ATLAS with RVQ (scored via $z_{\mathrm{rec}}$).
$z_{\mathrm{rec}}$ consistently outperforms $z_{\mathrm{proj}}$ across
all target domains, improving HR@10 by 45\% on average, showing that
RVQ codebooks improve zero-shot transfer beyond the continuous
alignment losses alone, by suppressing source-specific variation while
preserving transferable structure. Detailed results are in
Appendix.
\subsection{RQ4: Source-Domain Diversity Effect}
\label{sec:rq4}
Table~\ref{tab:source_diversity} shows target-domain zero-shot performance as
the number of source domains increases (source-domain performance at
each $K$ is provided in Appendix). The
monotonic trend on target domains suggests that exposure to more
heterogeneous recommendation environments improves the transferability
of the learned representations, supporting the motivation behind RDG.
\subsection{Representational Analysis}

We examine whether the learned embeddings exhibit the intended invariances. Additional analyses and diagnostics are provided in the Appendix and Media Supplement.\\
\textbf{User-space alignment.}
Figure~\ref{fig:user_alignment} projects 500 users per source domain
onto the top-3 PCA components. Domain
clusters, initially separated, progressively overlap during training; consistent with $\mathcal{L}_{\mathrm{GW}}$ aligning relational
geometry without user correspondences.\\
\textbf{Item-domain invariance.}
Figure~\ref{fig:pad_matrix} reports the pairwise Proxy
$\mathcal{A}$-distance (PAD) between source-domain item representations,
$\hat d_{\mathcal A}(k,k')
=
2\bigl(1-2\epsilon_{\mathcal H}(k,k')\bigr),$
where $\epsilon_{\mathcal H}$ is the error of a linear domain classifier. Lower PAD indicates reduced linear separability
between domains. PAD decreases substantially during training for most
domain pairs, showing that a linear classifier becomes less able to
recover the source domain. This provides direct empirical evidence that
adversarial training reduces linearly domain-identifiable information in
the projected item representations.

\section{Conclusion}
\label{sec:conclusion}
We introduce RDG and propose ATLAS, a multi-source retrieval framework that learns transferable user--item representations through complementary semantic and geometric alignment with shared hierarchical RVQ. Experiments on five source and ten unseen Amazon domains show that ATLAS outperforms strong domain-specific, cross-domain, universal, and LLM-based baselines on most unseen domains under zero-shot transfer. Ablation studies and representation analyses further support the proposed alignment mechanisms, while source-domain diversity experiments suggest that broader training environments improve transferability. Overall, our results indicate that robust recommendation knowledge can emerge directly from recommendation-specific learning across heterogeneous domains, without target-domain adaptation or language-model pretraining. This work lays the foundation for RDG as a benchmark for developing recommendation models that generalize reliably across previously unseen recommendation environments.

\clearpage

\bibliography{aaai2027}

@InProceedings{pmlr-v37-ganin15,
  title = 	 {Unsupervised Domain Adaptation by Backpropagation},
  author = 	 {Ganin, Yaroslav and Lempitsky, Victor},
  booktitle = 	 {Proceedings of the 32nd International Conference on Machine Learning},
  pages = 	 {1180--1189},
  year = 	 {2015},
  editor = 	 {Bach, Francis and Blei, David},
  volume = 	 {37},
  series = 	 {Proceedings of Machine Learning Research},
  address = 	 {Lille, France},
  month = 	 {07--09 Jul},
  publisher =    {PMLR},
  url = 	 {https://proceedings.mlr.press/v37/ganin15.html},
}

@article{ben2010theory,
  title={A theory of learning from different domains},
  author={Ben-David, Shai and Blitzer, John and Crammer, Koby and Kulesza, Alex and Pereira, Fernando and Vaughan, Jennifer Wortman},
  journal={Machine learning},
  volume={79},
  number={1},
  pages={151--175},
  year={2010},
  publisher={Springer}
}

@misc{goodfellow2014generativeadversarialnetworks,
  title={Generative Adversarial Networks},
  author={Goodfellow, Ian J. and Pouget-Abadie, Jean and Mirza, Mehdi and Xu, Bing and Warde-Farley, David and Ozair, Sherjil and Courville, Aaron and Bengio, Yoshua},
  year={2014},
  eprint={1406.2661},
  archivePrefix={arXiv},
  primaryClass={stat.ML},
  url={https://arxiv.org/abs/1406.2661}
}

@inproceedings{pei2018multi,
  title={Multi-adversarial domain adaptation},
  author={Pei, Zhongyi and Cao, Zhangjie and Long, Mingsheng and Wang, Jianmin},
  booktitle={Proceedings of the AAAI Conference on Artificial Intelligence},
  volume={32},
  year={2018}
}

@article{Mmoli2011,
  title = {Gromov–Wasserstein Distances and the Metric Approach to Object Matching},
  volume = {11},
  ISSN = {1615-3383},
  url = {http://dx.doi.org/10.1007/s10208-011-9093-5},
  DOI = {10.1007/s10208-011-9093-5},
  number = {4},
  journal = {Foundations of Computational Mathematics},
  publisher = {Springer Science and Business Media LLC},
  author = {Mémoli,  Facundo},
  year = {2011},
  month = Apr,
  pages = {417–487}
}

@inproceedings{lyu2024llm,
  title={Llm-rec: Personalized recommendation via prompting large language models},
  author={Lyu, Hanjia and Jiang, Song and Zeng, Hanqing and Xia, Yinglong and Wang, Qifan and Zhang, Si and Chen, Ren and Leung, Chris and Tang, Jiajie and Luo, Jiebo},
  booktitle={Findings of the Association for Computational Linguistics: NAACL 2024},
  pages={583--612},
  year={2024}
}

@article{van2017neural,
  title={Neural discrete representation learning},
  author={Van Den Oord, Aaron and Vinyals, Oriol and others},
  journal={Advances in neural information processing systems},
  volume={30},
  year={2017}
}

@book{peyre2019computational,
  title={Computational optimal transport: With applications to data science},
  author={Peyr{\'e}, Gabriel and Cuturi, Marco},
  year={2019},
  publisher={Now Foundations and Trends}
}

@inproceedings{NIPS2013_af21d0c9,
 author = {Cuturi, Marco},
 booktitle = {Advances in Neural Information Processing Systems},
 editor = {C.J. Burges and L. Bottou and M. Welling and Z. Ghahramani and K. Weinberger},
 pages = {},
 publisher = {Curran Associates, Inc.},
 title = {Sinkhorn Distances: Lightspeed Computation of Optimal Transport},
 url = {https://proceedings.neurips.cc/paper_files/paper/2013/file/af21d0c97db2e27e13572cbf59eb343d-Paper.pdf},
 volume = {26},
 year = {2013}
}

@inproceedings{hou2023vqrec,
  author = {Yupeng Hou and Zhankui He and Julian McAuley and Wayne Xin Zhao},
  title = {Learning Vector-Quantized Item Representation for Transferable Sequential Recommenders},
  booktitle={{TheWebConf}},
  year = {2023}
}

@inproceedings{kang2018self,
  title={Self-attentive sequential recommendation},
  author={Kang, Wang-Cheng and McAuley, Julian},
  booktitle={2018 IEEE international conference on data mining (ICDM)},
  pages={197--206},
  year={2018},
  organization={IEEE}
}

@article{hou2024bridging,
  title={Bridging Language and Items for Retrieval and Recommendation},
  author={Hou, Yupeng and Li, Jiacheng and He, Zhankui and Yan, An and Chen, Xiusi and McAuley, Julian},
  journal={arXiv preprint arXiv:2403.03952},
  year={2024}
}

@inproceedings{tang2023robust,
  title={Robust attributed graph alignment via joint structure learning and optimal transport},
  author={Tang, Jianheng and Zhang, Weiqi and Li, Jiajin and Zhao, Kangfei and Tsung, Fugee and Li, Jia},
  booktitle={2023 IEEE 39th International Conference on Data Engineering (ICDE)},
  pages={1638--1651},
  year={2023},
  organization={IEEE}
}

@inproceedings{he2020lightgcn,
  title={Lightgcn: Simplifying and powering graph convolution network for recommendation},
  author={He, Xiangnan and Deng, Kuan and Wang, Xiang and Li, Yan and Zhang, Yongdong and Wang, Meng},
  booktitle={Proceedings of the 43rd International ACM SIGIR conference on research and development in Information Retrieval},
  pages={639--648},
  year={2020}
}

@inproceedings{sun2019bert4rec,
  title={BERT4Rec: Sequential recommendation with bidirectional encoder representations from transformer},
  author={Sun, Fei and Liu, Jun and Wu, Jian and Pei, Changhua and Lin, Xiao and Ou, Wenwu and Jiang, Peng},
  booktitle={Proceedings of the 28th ACM international conference on information and knowledge management},
  pages={1441--1450},
  year={2019}
}

@inproceedings{hou2022towards,
  title={Towards universal sequence representation learning for recommender systems},
  author={Hou, Yupeng and Mu, Shanlei and Zhao, Wayne Xin and Li, Yaliang and Ding, Bolin and Wen, Ji-Rong},
  booktitle={Proceedings of the 28th ACM SIGKDD conference on knowledge discovery and data mining},
  pages={585--593},
  year={2022}
}

@inproceedings{wang2024pre,
  title={A pre-trained zero-shot sequential recommendation framework via popularity dynamics},
  author={Wang, Junting and Rathi, Praneet and Sundaram, Hari},
  booktitle={Proceedings of the 18th ACM Conference on Recommender Systems},
  pages={433--443},
  year={2024}
}

@inproceedings{bao2023tallrec,
  title={Tallrec: An effective and efficient tuning framework to align large language model with recommendation},
  author={Bao, Keqin and Zhang, Jizhi and Zhang, Yang and Wang, Wenjie and Feng, Fuli and He, Xiangnan},
  booktitle={Proceedings of the 17th ACM conference on recommender systems},
  pages={1007--1014},
  year={2023}
}

@inproceedings{li2025llm,
  title={Llm-recg: A semantic bias-aware framework for zero-shot sequential recommendation},
  author={Li, Yunzhe and Wang, Junting and Sundaram, Hari and Liu, Zhining},
  booktitle={Proceedings of the Nineteenth ACM Conference on Recommender Systems},
  pages={237--246},
  year={2025}
}

@article{10.1145/3715099,
author = {Xin, Haoran and Sun, Ying and Wang, Chao and Xiong, Hui},
title = {LLMCDSR: Enhancing Cross-Domain Sequential Recommendation with Large Language Models},
year = {2025},
issue_date = {September 2025},
publisher = {Association for Computing Machinery},
address = {New York, NY, USA},
volume = {43},
number = {5},
issn = {1046-8188},
url = {https://doi.org/10.1145/3715099},
doi = {10.1145/3715099},
journal = {ACM Trans. Inf. Syst.},
month = jul,
articleno = {120},
numpages = {33}
}

@article{ding2021zero,
  title={Zero-shot recommender systems},
  author={Ding, Hao and Ma, Yifei and Deoras, Anoop and Wang, Yuyang and Wang, Hao},
  journal={arXiv preprint arXiv:2105.08318},
  year={2021}
}

@article{rajput2023recommender,
  title={Recommender systems with generative retrieval},
  author={Rajput, Shashank and Mehta, Nikhil and Singh, Anima and Hulikal Keshavan, Raghunandan and Vu, Trung and Heldt, Lukasz and Hong, Lichan and Tay, Yi and Tran, Vinh and Samost, Jonah and others},
  journal={Advances in Neural Information Processing Systems},
  volume={36},
  pages={10299--10315},
  year={2023}
}

@inproceedings{hu2018conet,
  title={Conet: Collaborative cross networks for cross-domain recommendation},
  author={Hu, Guangneng and Zhang, Yu and Yang, Qiang},
  booktitle={Proceedings of the 27th ACM international conference on information and knowledge management},
  pages={667--676},
  year={2018}
}

@inproceedings{man2017cross,
  title={Cross-domain recommendation: An embedding and mapping approach.},
  author={Man, Tong and Shen, Huawei and Jin, Xiaolong and Cheng, Xueqi},
  booktitle={Ijcai},
  volume={17},
  pages={2464--2470},
  year={2017}
}

@inproceedings{inproceedings,
  author = {Elkahky, Ali Mamdouh and Song, Yang and He, Xiaodong},
  title = {A Multi-View Deep Learning Approach for Cross Domain User Modeling in Recommendation Systems},
  booktitle = {Proceedings of the 24th International Conference on World Wide Web (WWW)},
  pages = {278--288},
  year = {2015},
  doi = {10.1145/2736277.2741667}
}

@inproceedings{li2022gromov,
  title={Gromov-wasserstein guided representation learning for cross-domain recommendation},
  author={Li, Xinhang and Qiu, Zhaopeng and Zhao, Xiangyu and Wang, Zihao and Zhang, Yong and Xing, Chunxiao and Wu, Xian},
  booktitle={Proceedings of the 31st ACM International Conference on Information \& Knowledge Management},
  pages={1199--1208},
  year={2022}
}

@inproceedings{liu2022exploiting,
  title={Exploiting variational domain-invariant user embedding for partially overlapped cross domain recommendation},
  author={Liu, Weiming and Zheng, Xiaolin and Su, Jiajie and Hu, Mengling and Tan, Yanchao and Chen, Chaochao},
  booktitle={Proceedings of the 45th International ACM SIGIR conference on research and development in information retrieval},
  pages={312--321},
  year={2022}
}

@inproceedings{xiao2026modeling,
  title={Modeling User Preferences as Distributions for Optimal Transport-Based Cross-Domain Recommendation under Non-overlapping Settings},
  author={Xiao, Ziyin and Suzumura, Toyotaro},
  booktitle={Pacific-Asia Conference on Knowledge Discovery and Data Mining},
  pages={447--458},
  year={2026},
  organization={Springer}
}

@inproceedings{lin2024pre,
  title={Pre-trained recommender systems: A causal debiasing perspective},
  author={Lin, Ziqian and Ding, Hao and Hoang, Nghia Trong and Kveton, Branislav and Deoras, Anoop and Wang, Hao},
  booktitle={Proceedings of the 17th ACM International Conference on Web Search and Data Mining},
  pages={424--433},
  year={2024}
}

@inproceedings{geng2022recommendation,
  title={Recommendation as language processing (rlp): A unified pretrain, personalized prompt \& predict paradigm (p5)},
  author={Geng, Shijie and Liu, Shuchang and Fu, Zuohui and Ge, Yingqiang and Zhang, Yongfeng},
  booktitle={Proceedings of the 16th ACM conference on recommender systems},
  pages={299--315},
  year={2022}
}

@inproceedings{zhou2025recbase,
  title={Recbase: Generative foundation model pretraining for zero-shot recommendation},
  author={Zhou, Sashuai and Gan, Weinan and Liu, Qijiong and Lei, Ke and Zhu, Jieming and Huang, Hai and Xia, Yan and Tang, Ruiming and Dong, Zhenhua and Zhao, Zhou},
  booktitle={Proceedings of the 2025 Conference on Empirical Methods in Natural Language Processing},
  pages={15598--15610},
  year={2025}
}

@article{10.1145/3716393,
author = {Bao, Keqin and Zhang, Jizhi and Wang, Wenjie and Zhang, Yang and Yang, Zhengyi and Luo, Yanchen and Chen, Chong and Feng, Fuli and Tian, Qi},
title = {A Bi-Step Grounding Paradigm for Large Language Models in Recommendation Systems},
year = {2025},
issue_date = {December 2025},
publisher = {Association for Computing Machinery},
address = {New York, NY, USA},
volume = {3},
number = {4},
url = {https://doi.org/10.1145/3716393},
doi = {10.1145/3716393},
journal = {ACM Trans. Recomm. Syst.},
month = apr,
articleno = {53},
numpages = {27}
}

\clearpage
\appendix
\onecolumn
\section{Appendix}
\label{sec:appendix}
\subsection{Implementation Details}
\label{app:implementation}

All experiments were run on 2$\times$NVIDIA RTX 4090 GPUs (24GB
each). Training used PyTorch. The user projection $P_u$ and item
projection $P_i$ are each a 3-layer MLP with LayerNorm, mapping
their respective input dimensions to the shared projection space. Models are trained for up to 4,000
epochs with early stopping, batch size 16,384, and the Adam
optimizer with a learning rate of $1\times10^{-3}$. Full loss
weights and codebook configuration are reported in
Table~\ref{tab:hyperparams_app}.

\subsection{Theoretical Justification}
\label{app:theory}

This appendix gives the complete definitions, theorem statements,
and proof sketches summarized in the Proposed Model section for
$\mathcal{L}_{\mathrm{gw}}$ and $\mathcal{L}_{\mathrm{adv}}$, and
shows explicitly how each borrowed result connects to the loss
terms used in the main paper's training objective.

\subsubsection{Gromov--Wasserstein Alignment: Full Derivation}
\label{app:theory_gw}

\paragraph{Metric measure spaces.}
We model each source domain's user distribution as a triple
$X_k = (Z_k, d_{\cos}, \mu_k)$, a metric measure space
(mm-space)~\cite{Mmoli2011} (Definition~5.1): $Z_k$ is the set of
user projections $z_u^k$ in domain $k$, $d_{\cos}$ is the cosine
distance, and $\mu_k$ assigns equal weight $\tfrac{1}{|\mathcal{U}^k|}$
to each user.

\paragraph{Continuous GW distance.}
The Gromov--Wasserstein distance between two mm-spaces $X_a$ and
$X_b$ finds the best soft matching $\pi$ between the two user
populations~\cite{Mmoli2011} (Definition~5.7, Eq.~5.9):
\begin{equation}
    \mathfrak{D}_2(X_a, X_b)
    \;=\;
    \inf_{\pi \in \mathcal{M}(\mu_a,\mu_b)}
    \frac{1}{2}
    \left(
    \int\!\!\int_{(Z_a \times Z_b)^2}
    \bigl|d_{\cos}(x,x') - d_{\cos}(y,y')\bigr|^2
    \,d\pi(x,y)\,d\pi(x',y')
    \right)^{\!\!\frac{1}{2}},
    \label{eq:gw_continuous_app}
\end{equation}
where $\mathcal{M}(\mu_a,\mu_b)$ is the set of valid couplings
between $\mu_a$ and $\mu_b$.

\begin{theorem}[M\'emoli~\cite{Mmoli2011}, Theorem~5.1(a)]
\label{thm:gw_isomorphism_app}
$\mathfrak{D}_2(X_a, X_b) = 0$ if and only if $X_a$ and $X_b$ are
isomorphic as mm-spaces, i.e., there exists a measure-preserving
isometry between them.
\end{theorem}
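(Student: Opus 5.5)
The easy direction comes first. Suppose $\phi: Z_a \to Z_b$ is a measure-preserving isometry, so $\phi_\#\mu_a = \mu_b$ and $d_{\cos}(\phi(x),\phi(x')) = d_{\cos}(x,x')$. We take the coupling $\pi = (\mathrm{id}\times\phi)_\#\mu_a$. Its marginals are $\mu_a$ and $\mu_b$, so $\pi \in \mathcal{M}(\mu_a,\mu_b)$. On the support of $\pi\otimes\pi$ we have $y=\phi(x)$ and $y'=\phi(x')$. Hence the integrand $|d_{\cos}(x,x')-d_{\cos}(y,y')|^2$ vanishes identically, and the infimum in \eqref{eq:gw_continuous_app} is $0$.

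For the converse we first show that the infimum is attained. Here $Z_a,Z_b$ are compact: in our setting they are finite subsets of $\mathbb{S}^{d-1}$, and in M\'emoli's setting compactness is part of the definition. By Prokhorov, $\mathcal{M}(\mu_a,\mu_b)$ is weakly compact. The functional
\begin{equation*}
J(\pi)=\int\!\!\int |d_{\cos}(x,x')-d_{\cos}(y,y')|^2\,d\pi\,d\pi
\end{equation*}
is continuous under weak convergence, because the integrand is bounded and continuous and $\pi_n\otimes\pi_n \to \pi\otimes\pi$ weakly. So some $\pi^*$ achieves $J(\pi^*)=0$. It follows that $d_{\cos}(x,x')=d_{\cos}(y,y')$ for $\pi^*\otimes\pi^*$-almost every pair. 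Since the integrand is continuous, this equality holds everywhere on $\mathrm{supp}(\pi^*)\times\mathrm{supp}(\pi^*)$.

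Next we build the isometry from $\pi^*$. Take $(x,y),(x,y')\in\mathrm{supp}(\pi^*)$ with the same first coordinate. Then $d_{\cos}(y,y') = d_{\cos}(x,x)=0$, so $y=y'$. The support is therefore the graph of a map $\phi$ defined on $\mathrm{proj}_1(\mathrm{supp}\,\pi^*)$. This projection equals $\mathrm{supp}(\mu_a)$: it contains it by the marginal constraint and compactness, and the reverse inclusion follows by the symmetric argument. The map $\phi$ preserves distances, and by the symmetric argument it is a bijection onto $\mathrm{supp}(\mu_b)$. Because $\pi^*=(\mathrm{id}\times\phi)_\#\mu_a$, we get $\phi_\#\mu_a=\mu_b$, so $\phi$ is a measure-preserving isometry. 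As usual for mm-spaces, isomorphism is understood up to supports; with uniform weights on finite user sets the supports are the whole spaces.

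The main obstacle is this last step: passing from an ``almost everywhere'' equality to a genuine map defined on the full support. I would handle it through the continuity-of-integrand argument above. If the ambient spaces are not assumed compact, I would instead fall back on citing M\'emoli's Theorem~5.1(a) directly. A further subtlety is that $d_{\cos}$ is not a true metric on the sphere, because it fails the triangle inequality. The argument above uses only symmetry and the property $d(y,y')=0 \Rightarrow y=y'$, both of which $d_{\cos}$ satisfies. I would note this explicitly, or replace $d_{\cos}$ by the geodesic distance, which is a monotone function of $d_{\cos}$, so that the hypotheses of the cited theorem hold literally.
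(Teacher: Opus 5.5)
Your proof is correct, but it takes a different route from the paper, which gives no argument of its own and simply imports the statement from M\'emoli. Your converse is essentially the standard proof behind that citation:
\begin{itemize}
\item an optimal coupling $\pi^*$ exists, by weak compactness of $\mathcal{M}(\mu_a,\mu_b)$ and weak continuity of the quadratic functional;
\item the distortion then vanishes identically on $\mathrm{supp}(\pi^*)\times\mathrm{supp}(\pi^*)$;
\item so $\mathrm{supp}(\pi^*)$ is a zero-distortion correspondence, i.e.\ the graph of a bijective isometry $\phi$ with $\phi_\#\mu_a=\mu_b$.
\end{itemize}

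Writing this out buys you something over the bare citation, and your remark about hypotheses is the reason. Your argument uses only continuity of the distance, $d(x,x)=0$, and $d(x,y)=0\Rightarrow x=y$. On the sphere, $d_{\cos}(x,y)=1-\langle x,y\rangle=\tfrac12\|x-y\|_2^2$, which is not a metric. M\'emoli's theorem is stated for compact metric spaces, so it does not literally cover the paper's $X_k$, while your proof does. This also makes your geodesic-distance fallback unnecessary.

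Two steps each need one more line.
\begin{itemize}
\item The inclusion $\mathrm{proj}_1(\mathrm{supp}\,\pi^*)\subseteq\mathrm{supp}(\mu_a)$ does not come from a ``symmetric argument''. It follows from the definition of support: if $(x,y)\in\mathrm{supp}(\pi^*)$, then $\mu_a(U)=\pi^*(U\times Z_b)>0$ for every open $U\ni x$.
\item The identity $\pi^*=(\mathrm{id}\times\phi)_\#\mu_a$ should be derived rather than asserted. Since $\pi^*$ is concentrated on the closed graph of the continuous map $\phi$, you get $\pi^*(A\times B)=\mu_a(A\cap\phi^{-1}(B))$ for Borel $A,B$. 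Taking $A=Z_a$ then gives $\phi_\#\mu_a=\mu_b$.
\end{itemize}
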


\noindent
This is the target our training loss approaches: driving
$\mathfrak{D}_2 \to 0$ means all source-domain user populations
share the same geometric structure, which an unseen target domain
with similar preference geometry then naturally inherits at
inference.

\paragraph{Why direct minimization fails.}
Finding the optimal $\pi$ in Eq.~\eqref{eq:gw_continuous_app}
requires solving a Quadratic Assignment Problem, NP-hard and
$O(n^3)$ per domain pair~\cite{Mmoli2011}. We require a tractable
proxy.

\paragraph{Distance distribution and the Second Lower Bound.}
For each domain $k$, consider the distribution of cosine distances
between all pairs of users in $X_k$. $\mathrm{SLB}_2$ measures how
different these distributions are across two domains.

\begin{proposition}[M\'emoli~\cite{Mmoli2011}, Proposition~6.2]
\label{prop:slb_app}
\begin{equation}
    \mathrm{SLB}_2(X_a, X_b) \;\le\; \mathfrak{D}_2(X_a, X_b).
    \label{eq:slb_ineq_app}
\end{equation}
\end{proposition}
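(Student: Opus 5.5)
The plan is to prove the inequality by a \emph{relaxation} argument. $\mathrm{SLB}_2$ optimises the same integrand as $\mathfrak{D}_2$ in Eq.~\eqref{eq:gw_continuous_app}, but over a strictly larger set of couplings. First I fix the definition, following M\'emoli's Definition~6.3 with the same $\tfrac12$ normalisation used in Eq.~\eqref{eq:gw_continuous_app}:
\begin{equation*}
\mathrm{SLB}_2(X_a,X_b)=\frac12\inf_{\gamma\in\mathcal{M}(\mu_a\otimes\mu_a,\;\mu_b\otimes\mu_b)}\left(\int\bigl|d_{\cos}(x,x')-d_{\cos}(y,y')\bigr|^2\,d\gamma\bigl((x,x'),(y,y')\bigr)\right)^{\frac12}.
\end{equation*}
Here $\gamma$ is any coupling on $(Z_a\times Z_a)\times(Z_b\times Z_b)$ between the product measures. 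Because the cost depends on $\gamma$ only through the pair of values $(d_{\cos}(x,x'),d_{\cos}(y,y'))$, this equals $\tfrac12 W_2(\nu_a,\nu_b)$. Here $\nu_k=(d_{\cos})_\#(\mu_k\otimes\mu_k)$ is the distance distribution on $\mathbb{R}$.

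The key step is the embedding of couplings. Take any $\pi\in\mathcal{M}(\mu_a,\mu_b)$ and form $\pi\otimes\pi$ on $(Z_a\times Z_b)^2$. Let $R$ be the coordinate rearrangement $((x,y),(x',y'))\mapsto((x,x'),(y,y'))$, and set $\gamma_\pi=R_\#(\pi\otimes\pi)$. I then check the marginals. Projecting $\gamma_\pi$ onto the $(x,x')$ coordinates gives $\pi_1\otimes\pi_1=\mu_a\otimes\mu_a$, and projecting onto $(y,y')$ gives $\mu_b\otimes\mu_b$. Hence $\gamma_\pi$ is admissible for $\mathrm{SLB}_2$.

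Next, by the change of variables formula, the $\mathrm{SLB}_2$ objective evaluated at $\gamma_\pi$ equals exactly the double integral inside Eq.~\eqref{eq:gw_continuous_app} for $\pi$. It follows that $\mathrm{SLB}_2(X_a,X_b)$ is at most the GW cost of $\pi$, for every admissible $\pi$. Taking the infimum over $\pi$, and using that $t\mapsto\tfrac12 t^{1/2}$ is monotone, gives $\mathrm{SLB}_2\le\mathfrak{D}_2$.

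The main obstacle is bookkeeping rather than substance. The constants and exponents in the two definitions must match, since M\'emoli's $\tfrac12$ and $p$-th root conventions must be applied identically to both sides. I also need the rearranged product coupling to be a legitimate Borel probability measure with the claimed marginals, which is immediate here because the empirical mm-spaces are finite.

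Finally, I connect the result to the training loss. In the finite equal-size case, $W_2$ between one-dimensional empirical measures is attained by the monotone (quantile) coupling. This yields the order-statistic formula behind $\mathrm{GWLB}^2$ in Eq.~\eqref{eq:lgw}. Consequently $\mathcal{L}_{\mathrm{gw}}$ is, up to the constant factor, an average of squared lower bounds on $\mathfrak{D}_2$.
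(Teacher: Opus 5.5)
Your proof is correct and complete. The paper gives no argument of its own and simply cites M\'emoli, and your embedding $\pi\mapsto R_\#(\pi\otimes\pi)$ of $\mathcal{M}(\mu_a,\mu_b)$ into $\mathcal{M}(\mu_a\otimes\mu_a,\mu_b\otimes\mu_b)$, which leaves the cost unchanged, is the standard relaxation argument behind that cited bound.

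One precision for your closing remark. On an $n$-user batch, $(d_{\cos})_\#(\mu_k\otimes\mu_k)=\tfrac1n\delta_0+\tfrac{n-1}{n}\mu^k$, because of the diagonal atom at $0$ and because each unordered pair is counted twice. Hence $\mathrm{GWLB}^2(a,b)=\tfrac{4n}{n-1}\,\mathrm{SLB}_2(X_a,X_b)^2\le\tfrac{4n}{n-1}\,\mathfrak{D}_2(X_a,X_b)^2$ for the batch mm-spaces. This makes $\mathcal{L}_{\mathrm{gw}}\to 0$ necessary but not sufficient for $\mathfrak{D}_2\to 0$. Your direction is the right one; the paper's Corollary chain reads the proposition backwards.
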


\noindent
Driving $\mathrm{SLB}_2 \to 0$ is therefore a \emph{necessary} step
toward the isomorphism target of Theorem~\ref{thm:gw_isomorphism_app}.

The quantile function $F_k^{-1}(s)$ of domain $k$'s distance
distribution is the $s$-th percentile of pairwise distances in that
domain. Corollary~6.2 of M\'emoli~\cite{Mmoli2011} gives:

\begin{proposition}[M\'emoli~\cite{Mmoli2011}, Corollary~6.2]
\label{prop:slb_quantile_app}
\begin{equation}
    \bigl(\mathrm{SLB}_2(X_a,X_b)\bigr)^2
    \;\ge\;
    \frac{1}{4}
    \int_0^1
    \bigl(F_a^{-1}(s) - F_b^{-1}(s)\bigr)^2\,ds.
    \label{eq:slb_quantile_app}
\end{equation}
\end{proposition}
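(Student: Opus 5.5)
We prove the quantile lower bound by reducing $\mathrm{SLB}_2$ to an optimal transport problem on the real line, which has a closed form in terms of quantile functions. Recall M\'emoli's definition of the second lower bound:
\[
\mathrm{SLB}_2(X_a,X_b)=\tfrac12\inf_{\gamma\in\mathcal{M}(\mu_a\otimes\mu_a,\;\mu_b\otimes\mu_b)}\Bigl(\int\!\!\int |d_{\cos}(x,x')-d_{\cos}(y,y')|^2\,d\gamma\bigl((x,x'),(y,y')\bigr)\Bigr)^{1/2}.
\]
This agrees with the integrand of Eq.~\eqref{eq:gw_continuous_app}. It relaxes the product coupling $\pi\otimes\pi$ to an arbitrary coupling $\gamma$ of the product measures, which is also how Proposition~\ref{prop:slb_app} is obtained.

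\textbf{Step 1: push forward to the line.} Let $\nu_k=(d_{\cos})_\#(\mu_k\otimes\mu_k)$ be the distribution of pairwise distances in domain $k$, with cumulative distribution function $F_k$. The cost $|d_{\cos}(x,x')-d_{\cos}(y,y')|^2$ depends on the pair $((x,x'),(y,y'))$ only through the two distance values. So every $\gamma$ pushes forward under $(d_{\cos},d_{\cos})$ to a coupling $\tilde\gamma\in\mathcal{M}(\nu_a,\nu_b)$ with the same cost. This gives
\[
(\mathrm{SLB}_2)^2\;\ge\;\tfrac14\inf_{\tilde\gamma\in\mathcal{M}(\nu_a,\nu_b)}\int|s-t|^2\,d\tilde\gamma(s,t)=\tfrac14W_2^2(\nu_a,\nu_b).
\]
The reverse inequality also holds, by gluing or disintegration, so this is in fact an equality. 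Only the lower bound is needed here, which lets us skip the measurable-selection subtleties.

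\textbf{Step 2: one-dimensional OT.} We invoke the classical fact that on $\mathbb{R}$ the monotone (quantile) coupling $(F_a^{-1}(U),F_b^{-1}(U))$, with $U$ uniform on $[0,1]$, is optimal for any convex cost of $s-t$:
\[
W_2^2(\nu_a,\nu_b)=\int_0^1\bigl(F_a^{-1}(s)-F_b^{-1}(s)\bigr)^2ds,
\]
as in \cite{peyre2019computational}. To prove this, take any coupling and any pair of points $(s_1,t_1),(s_2,t_2)$ in its support with $s_1<s_2$ and $t_1>t_2$. Swapping the targets does not increase the quadratic cost, since $(s_1-t_2)^2+(s_2-t_1)^2\le(s_1-t_1)^2+(s_2-t_2)^2$. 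Hence an optimal coupling has monotone support, and the only coupling with monotone support is the quantile coupling. Combining this with Step~1 yields Eq.~\eqref{eq:slb_quantile_app}. For the empirical measures $\mu^k$ of the main text, which have $N$ atoms each, the integral reduces to the order-statistic sum defining $\mathrm{GWLB}^2$.

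\textbf{Main obstacle.} The delicate point is Step~2, the cyclical-monotonicity argument when $\nu_k$ has atoms or non-invertible $F_k$. I would avoid this by working directly with the generalized inverse $F^{-1}(s)=\inf\{t:F(t)\ge s\}$. The argument then only needs two facts: the quantile coupling is admissible, because $F_k^{-1}(U)\sim\nu_k$; and its cost lower-bounds every other coupling, which follows from the submodularity of $-(s-t)^2$ together with the rearrangement (Hoeffding--Fr\'echet) inequality $\mathbb{E}[ST]\le\int_0^1F_a^{-1}F_b^{-1}$. Expanding the square reduces the claim to this covariance inequality, which holds in full generality.
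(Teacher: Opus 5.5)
Your proof is correct, but it does not follow the paper's route, because the paper gives no argument of its own. It imports the inequality from M\'emoli's Corollary~6.2 and separately quotes the one-dimensional quantile identity as Proposition~\ref{prop:discrete_app}. Your self-contained derivation is essentially the reasoning behind that citation. Step~1 pushes any coupling of $\mu_a\otimes\mu_a$ and $\mu_b\otimes\mu_b$ forward to a coupling of the distance distributions $\nu_a,\nu_b$ with the same cost, which gives $(\mathrm{SLB}_2)^2\ge\tfrac{1}{4}W_2^2(\nu_a,\nu_b)$. Step~2 then bounds every coupling on $\mathbb{R}$ from below by the comonotone one.

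Of your two versions of Step~2, keep the Hoeffding--Fr\'echet one. Expanding the square and using $\mathbb{E}[ST]\le\int_0^1F_a^{-1}F_b^{-1}\,ds$ gives exactly the direction needed; second moments are finite because cosine distances lie in $[0,2]$. It needs neither the existence of an optimal plan nor cyclical monotonicity of its support, both of which the swapping sketch would have to establish. There is also a small sign slip: it is $(s-t)^2$ that is submodular (equivalently, $st$ is supermodular), not $-(s-t)^2$.

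What your route gives beyond the citation is that the bound is an equality, $(\mathrm{SLB}_2)^2=\tfrac{1}{4}\int_0^1(F_a^{-1}-F_b^{-1})^2\,ds$, once the gluing direction is added. You set that direction aside, and for this statement it is indeed unnecessary. It is, however, exactly what the paper needs later. The step $\mathrm{GWLB}^2\to 0\Rightarrow\mathrm{SLB}_2\to 0$ in Corollary~\ref{cor:chain_app} requires an upper bound on $\mathrm{SLB}_2$, which a lower bound cannot supply. The equality also shows that Eq.~\eqref{eq:slb_to_gwlb_app} has dropped the factor $\tfrac{1}{4}$.

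Finally, your closing identification with $\mathrm{GWLB}^2$ is slightly off. If $\mu_k$ is uniform on the $n$ batch users, then $\nu_k=(d_{\cos})_\#(\mu_k\otimes\mu_k)$ includes the diagonal pairs and counts each off-diagonal pair twice. Hence $\nu_k=\tfrac{1}{n}\delta_0+\tfrac{n-1}{n}\mu^k$, and the quantile integral equals $\tfrac{n-1}{n}\,\mathrm{GWLB}^2(a,b)$ rather than $\mathrm{GWLB}^2(a,b)$.
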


\begin{proposition}[Quantile integral to discrete sum;
  P\'eyr\'e \& Cuturi~\cite{peyre2019computational}, Remark~2.28]
\label{prop:discrete_app}
For two empirical distance distributions of equal size $N$:
\begin{equation}
    \int_0^1
    \bigl(F_a^{-1}(s) - F_b^{-1}(s)\bigr)^2\,ds
    \;=\;
    \frac{1}{N}\sum_{r=1}^{N}
    \bigl(d^a_{(r)} - d^b_{(r)}\bigr)^2.
    \label{eq:integral_to_sum_app}
\end{equation}
\end{proposition}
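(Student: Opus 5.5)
Since $F_k^{-1}$ is a step function determined by the order statistics, I would write the left-hand side of the statement as a sum of integrals over the cells where both quantile functions are constant, and evaluate each cell exactly.

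First I fix the empirical measures. For $k\in\{a,b\}$ let $\mu^k=\frac1N\sum_{r=1}^N\delta_{d^k_{(r)}}$, with sorted atoms $d^k_{(1)}\le\cdots\le d^k_{(N)}$; ties are kept as repeated atoms, so each atom carries mass exactly $1/N$. The cumulative distribution function $F_k(t)=\mu^k((-\infty,t])$ is a right-continuous step function. I take its generalized inverse to be $F_k^{-1}(s)=\inf\{t: F_k(t)\ge s\}$ for $s\in(0,1]$.

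Next I identify the quantile function explicitly. For $s\in\bigl(\tfrac{r-1}{N},\tfrac rN\bigr]$ I claim $F_k^{-1}(s)=d^k_{(r)}$.
\begin{itemize}
\item The value $t=d^k_{(r)}$ qualifies, because $F_k(d^k_{(r)})\ge r/N\ge s$.
\item Any $t<d^k_{(r)}$ fails, because at most $r-1$ atoms lie at or below $t$, so $F_k(t)\le (r-1)/N<s$.
\end{itemize}
Hence the infimum equals $d^k_{(r)}$. Ties cause no difficulty: counting multiplicity, $F_k$ jumps by a multiple of $1/N$, and the formula still assigns the correct repeated value on each cell.

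The key structural fact is that both $\mu^a$ and $\mu^b$ have the same number $N$ of equally weighted atoms. Their quantile functions therefore have breakpoints on the same grid $\{r/N\}$. This equal-size hypothesis is what makes the statement clean, and the argument above uses it directly.

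Finally I split the integral over $(0,1]$ into the $N$ cells $\bigl(\tfrac{r-1}{N},\tfrac rN\bigr]$. On each cell the integrand is the constant $(d^a_{(r)}-d^b_{(r)})^2$, and each cell has length $1/N$. Summing the cells gives
\begin{equation*}
\int_0^1\bigl(F_a^{-1}(s)-F_b^{-1}(s)\bigr)^2\,ds=\sum_{r=1}^{N}\frac1N\bigl(d^a_{(r)}-d^b_{(r)}\bigr)^2 .
\end{equation*}
The single point $s=0$ has measure zero and does not affect the integral.

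As a sanity check connecting to the main text: the left-hand side is the standard one-dimensional formula for $W_2^2(\mu^a,\mu^b)$ (Peyré \& Cuturi, Remark 2.28). The right-hand side is exactly the monotone (sorted) matching of the two sets of atoms. This confirms the expression used for $\mathrm{GWLB}^2$ in Eq.~\eqref{eq:lgw}.

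The one genuinely delicate step is fixing the convention for $F_k^{-1}$, namely left- versus right-continuity and how ties are handled. Any other common convention differs only at the finitely many breakpoints, which are a Lebesgue-null set, so the integral is unchanged. I would state this explicitly so the result does not depend on the convention chosen.
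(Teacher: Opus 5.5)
Your proof is correct and is essentially the argument the paper relies on. The paper gives no derivation of its own: it cites the one-dimensional optimal-transport remark of Peyr\'e \& Cuturi and notes that rank-by-rank comparison is exact, which is precisely your identification of $F_k^{-1}$ with $d^k_{(r)}$ on $\bigl(\tfrac{r-1}{N},\tfrac{r}{N}\bigr]$ followed by cell-by-cell integration. The only refinement concerns your closing remark: the convention-independence holds for the left- and right-continuous generalized inverses of $F_k$, but not for interpolating sample-percentile rules, which the paper's loose phrase ``the $s$-th percentile'' might suggest. So $F_k^{-1}$ must be read as the quantile function of $\mu^k$ itself.
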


\noindent
Sorting the $N$ pairwise distances and comparing them rank-by-rank
is exact here, not an approximation. The right-hand side is
precisely the $\mathrm{GWLB}^2(a,b)$ term used in the main paper:
\begin{equation}
    \mathrm{GWLB}^2(a,b)
    \;=\;
    W_2^2(\mu^a,\mu^b)
    \;=\;
    \frac{1}{N}\sum_{r=1}^{N}
    \bigl(d^a_{(r)} - d^b_{(r)}\bigr)^2.
    \label{eq:gwlb_main}
\end{equation}

Substituting Eq.~\eqref{eq:integral_to_sum_app} into
Eq.~\eqref{eq:slb_quantile_app} and using Eq.~\eqref{eq:gwlb_main}:
\begin{equation}
    \bigl(\mathrm{SLB}_2(X_a,X_b)\bigr)^2
    \;\ge\;
    \frac{1}{N}\sum_{r=1}^{N}\bigl(d^a_{(r)}-d^b_{(r)}\bigr)^2
    \;=\;
    \,\mathrm{GWLB}^2(a,b).
    \label{eq:slb_to_gwlb_app}
\end{equation}

\begin{corollary}[Theoretical grounding of $\mathcal{L}_{\mathrm{gw}}$]
\label{cor:chain_app}
\begin{equation}
    \mathcal{L}_{\mathrm{gw}} \to 0
    \;\Rightarrow\;
    \mathrm{GWLB}^2 \to 0
    \;\Rightarrow\;
    \mathrm{SLB}_2 \to 0
    \;\xRightarrow{\text{Prop.~\ref{prop:slb_app}}}\;
    \mathfrak{D}_2 \to 0
    \;\xRightarrow{\text{Thm.~\ref{thm:gw_isomorphism_app}}}\;
    X_a \cong X_b.
    \label{eq:chain_app}
\end{equation}
\end{corollary}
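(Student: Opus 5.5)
The plan is to read the corollary exactly as a chain of four implications and to justify each arrow by one of the results stated just above. The first and last arrows are easy. The middle two are where the argument is actually decided.

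\emph{First arrow.} By Eq.~\eqref{eq:lgw}, $\mathcal{L}_{\mathrm{gw}}$ is the average over the $\binom{K}{2}$ domain pairs of the quantities $\mathrm{GWLB}^2(a,b)$. Each of these is a sum of squares and hence nonnegative. So if $\mathcal{L}_{\mathrm{gw}}\to 0$, then every summand tends to $0$, i.e.\ $\mathrm{GWLB}^2(a,b)\to 0$ for each fixed pair $(a,b)$. By Proposition~\ref{prop:discrete_app}, this is the same as the quantile integral $\int_0^1(F_a^{-1}-F_b^{-1})^2\,ds$ tending to $0$. Thus the empirical pairwise-distance distributions of the two domains converge to one another in $W_2$.

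\emph{Last arrow.} I will invoke Theorem~\ref{thm:gw_isomorphism_app}, which states that $\mathfrak{D}_2(X_a,X_b)=0$ if and only if $X_a\cong X_b$. To apply it under a limit ("$\to 0$"), I will pass to the limiting mm-spaces. The cosine distance lives on the compact sphere $\mathbb{S}^{d-1}$, so these spaces are compact, and $\mathfrak{D}_2$ is a metric on isomorphism classes (M\'emoli). Hence a limit with $\mathfrak{D}_2=0$ yields an isomorphism in the limit. The conclusion is $X_a\cong X_b$ in the limit.

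\emph{Middle arrows: the main obstacle.} The two middle arrows both run against the inequalities as they are stated.
\begin{itemize}
\item Eq.~\eqref{eq:slb_to_gwlb_app} gives $\mathrm{SLB}_2^2\ge \mathrm{GWLB}^2$. This is a lower bound on $\mathrm{SLB}_2$, so $\mathrm{GWLB}\to0$ does not by itself force $\mathrm{SLB}_2\to 0$.
\item Proposition~\ref{prop:slb_app} gives $\mathrm{SLB}_2\le\mathfrak{D}_2$. Again $\mathfrak{D}_2$ bounds $\mathrm{SLB}_2$ from above, not from below.
\end{itemize}
So following the cited propositions literally does not close the chain. I would first try to supply matching upper bounds of the form $\mathrm{SLB}_2\le C\cdot\mathrm{GWLB}$ and $\mathfrak{D}_2\le \Phi(\mathrm{SLB}_2)$ with $\Phi(0^+)=0$.

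\begin{itemize}
\item \emph{First of these bounds.} I would unpack M\'emoli's definition of $\mathrm{SLB}_2$. It is the $W_2$-type transport cost between the distance distributions under the product coupling $\mu\otimes\mu$, and for one-dimensional laws that cost is attained by the monotone (quantile) coupling. This should give $\mathrm{SLB}_2=\tfrac12\,\mathrm{GWLB}$, i.e.\ equality up to the constant. The appearance of the factor $\tfrac14$ in Proposition~\ref{prop:slb_quantile_app} suggests equality, and equality would yield the second arrow.
\item \emph{Second of these bounds.} This is genuinely hard, because distance distributions do not in general determine an mm-space up to isomorphism. My fallback is therefore to restrict to the class of spaces the training objective actually produces. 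Concretely, I would assume finite equal-size batches whose sorted distance vectors coincide, together with a genericity condition on distances. Under that assumption, I would build the isomorphism by matching equal distance values pair by pair.
\end{itemize}
If that reconstruction fails, I would present the middle arrows as holding under this additional rigidity hypothesis, which is the crux of the result.
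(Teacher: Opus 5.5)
Your diagnosis of the two middle arrows is correct, and it locates exactly where the paper's own argument breaks. The paper justifies the corollary only by composing Proposition~\ref{prop:slb_app} ($\mathrm{SLB}_2\le\mathfrak{D}_2$) with the substituted bound $\mathrm{SLB}_2^2\ge\mathrm{GWLB}^2$, which also silently drops the factor $\tfrac14$ from Proposition~\ref{prop:slb_quantile_app}. In other words, it reads two lower bounds as if they were upper bounds. Its own prose, which calls $\mathrm{SLB}_2\to0$ a \emph{necessary} step, states the only direction those results support, so there is no ingredient in the paper that you overlooked. Your first arrow is fine, and so is your repair of the second. The monotone coupling makes the quantile formula an equality, and for uniform $n$-point batches the diagonal atom of $\mu\otimes\mu$ gives exactly $\mathrm{SLB}_2^2=\tfrac14\cdot\tfrac{n-1}{n}\,\mathrm{GWLB}^2$. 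This identity concerns the batch spaces, however, whereas the corollary's $X_k$ are full user populations. Bridging the two needs a sampling argument that neither you nor the paper supplies.

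The genuine gap is the third arrow, and no argument can close it because that arrow is false, and the corollary with it. Take $K=2$ and place $A=\{0,1,4,10,12,17\}$ and $B=\{0,1,8,11,13,17\}$ on a great circle via $z_x=(\cos\epsilon x,\sin\epsilon x,0,\dots,0)$ with $17\epsilon<\pi$, uniform weights and $n=6$. Both sets have the same fifteen pairwise differences $\{1,\dots,13,16,17\}$, all distinct. Hence the sorted cosine-distance vectors coincide and $\mathcal{L}_{\mathrm{gw}}=\mathrm{GWLB}^2=\mathrm{SLB}_2=0$ exactly. Since $t\mapsto1-\cos(\epsilon t)$ is injective on $[0,17]$, an isomorphism would be a bijection $f:A\to B$ with $|f(x)-f(x')|=|x-x'|$. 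Such an $f$ must map $\{0,17\}$ onto $\{0,17\}$, forcing $f(x)=x$ or $f(x)=17-x$. Yet neither $A$ nor $17-A=\{0,5,7,13,16,17\}$ equals $B$, so $X_a\not\cong X_b$ and $\mathfrak{D}_2>0$ by Theorem~\ref{thm:gw_isomorphism_app}.

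The same example defeats your fallback. The distances are pairwise distinct, but matching equal values pair by pair yields a bijection between \emph{pairs} of points that is induced by no bijection between points. Genuine rigidity results do exist, for instance Boutin and Kemper's theorem that Zariski-generic configurations of $n\ge d+2$ points in $\mathbb{R}^d$ are determined up to congruence by their distance distribution. But they need a far stronger genericity notion, apply only to exact equality, and give no stability estimate, which is what ``$\to0$'' would require. What you can actually prove from the cited results is the converse chain $X_a\cong X_b\Rightarrow\mathfrak{D}_2=0\Rightarrow\mathrm{SLB}_2=0$, and hence vanishing population-level $\mathrm{GWLB}$. Anything stronger is a different statement that must carry an explicit rigidity hypothesis.
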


\noindent
This chain is the formal justification for the user-unification
loss used in the main paper, which averages $\mathrm{GWLB}^2$
(Eq.~\eqref{eq:gwlb_main}) over all $\binom{K}{2}$ domain pairs:
\begin{equation}
    \mathcal{L}_{\mathrm{gw}}
    \;=\;
    \frac{1}{\binom{K}{2}}
    \sum_{1 \le a < b \le K}
    \mathrm{GWLB}^2(a,b).
    \label{eq:lgw_main}
\end{equation}
By Corollary~\ref{cor:chain_app}, minimizing Eq.~\eqref{eq:lgw_main}
during training is a provably necessary step toward aligning the
internal geometry of every source domain's user population.

\subsubsection{Adversarial Alignment: Full Derivation}
\label{app:theory_adv}

\paragraph{Uniform convergence.}
\begin{lemma}[Ben-David et al.~\cite{ben2010theory}, Lemma~1]
\label{lem:convergence_app}
For $\mathcal{H}$ of VC dimension $d$ and samples $U_S,U_T$ of
size $m$, with probability $\geq 1-\delta$:
\begin{equation}
    d_\mathcal{H}(\mathcal{D}_S,\mathcal{D}_T)
    \leq
    \hat{d}_\mathcal{H}(U_S,U_T)
    + 4\sqrt{\frac{d\log(2m)+\log(2/\delta)}{m}}.
    \label{eq:convergence_app}
\end{equation}
\end{lemma}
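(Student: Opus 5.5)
We prove the statement by reducing it to a uniform deviation bound for a single class of symmetric-difference sets, followed by a standard VC argument. Recall $d_\mathcal{H}(\mathcal{D}_S,\mathcal{D}_T)=2\sup_{h\in\mathcal{H}}\lvert\Pr_{\mathcal{D}_S}[h=1]-\Pr_{\mathcal{D}_T}[h=1]\rvert$, and $\hat d_\mathcal{H}(U_S,U_T)$ is the same quantity with the empirical frequencies on the samples $U_S,U_T$ of size $m$. For fixed $h$, the triangle inequality splits the gap between the population and empirical discrepancies into a source deviation and a target deviation:
\[
\bigl\lvert \Pr_{\mathcal{D}_S}[h]-\Pr_{\mathcal{D}_T}[h]\bigr\rvert
\le \bigl\lvert \hat P_{U_S}[h]-\hat P_{U_T}[h]\bigr\rvert
+\bigl\lvert \Pr_{\mathcal{D}_S}[h]-\hat P_{U_S}[h]\bigr\rvert
+\bigl\lvert \Pr_{\mathcal{D}_T}[h]-\hat P_{U_T}[h]\bigr\rvert .
\]
Taking the supremum over $h$ and multiplying by $2$ gives $d_\mathcal{H}\le \hat d_\mathcal{H}+2\sup_h\Delta_S(h)+2\sup_h\Delta_T(h)$, where $\Delta_S(h)$ and $\Delta_T(h)$ denote the second and third terms. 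It therefore suffices to show that, with probability $\ge 1-\delta/2$ each,
\[
\sup_h\Delta_\bullet(h)\le \sqrt{\frac{d\log(2m)+\log(2/\delta)}{m}} .
\]
A union bound then gives total probability $\ge 1-\delta$ and the constant $4$.

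Each one-sided bound is the classical Vapnik--Chervonenkis uniform convergence of empirical frequencies over the set system $\{x:h(x)=1\}_{h\in\mathcal{H}}$, which has VC dimension $d$. We would proceed by symmetrization with a ghost sample of size $m$: this reduces the problem to bounding the deviation between two empirical measures. We then condition on the $2m$ points and count the distinct labelings by Sauer's lemma, which gives growth function $\Pi_\mathcal{H}(2m)\le (2m)^d$ (using $d\ge 1$, or $(e\cdot 2m/d)^d$ absorbed into the constant). Next, for each fixed labeling we apply Hoeffding's inequality to the random swap signs. Finally, a union bound over the at most $(2m)^d$ labelings and solving for $\epsilon$ produces the $d\log(2m)+\log(2/\delta)$ term under the square root.

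The main obstacle is bookkeeping of constants, not conceptual. Symmetrization typically costs a factor $2$ in $\epsilon$ and requires $m\epsilon^2\ge 2$; Hoeffding gives $2\exp(-m\epsilon^2/8)$ or a similar form depending on the variant. We would first use the sharpest standard form, e.g. Anthony and Bartlett, Theorem 4.3, $\Pr[\sup\Delta>\epsilon]\le 4\Pi(2m)e^{-m\epsilon^2/8}$, and check whether the resulting constant fits under the stated $4$. If it does not, we would state the lemma with an unspecified absolute constant, as Ben-David et al.\ effectively rely on the cited VC inequality rather than rederiving it. Since the lemma is quoted from \cite{ben2010theory}, the faithful proof is this reduction plus citation of the standard uniform-convergence theorem.
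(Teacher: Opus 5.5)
Your decomposition is right, and it is exactly where the constants in the statement come from: $d_\mathcal{H}\le\hat d_\mathcal{H}+2\sup_h\Delta_S(h)+2\sup_h\Delta_T(h)$, with each one-sample deviation controlled at confidence $\delta/2$, followed by a union bound. The paper states the lemma without proof, quoting Ben-David et al. They in turn obtain it by modifying an existing VC-type two-sample theorem (Kifer, Ben-David and Gehrke, 2004). So your ``reduction plus citation'' plan has the same architecture as the source.

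The gap is in the uniform-convergence ingredient. For the choice $t^2=(d\log(2m)+\log(2/\delta))/m$ to give failure probability $\delta/2$, you need a one-sample bound of the form
\[
\Pr\Bigl[\sup_{h\in\mathcal{H}}\bigl\lvert \hat P_{U}[h]-\Pr_{\mathcal{D}}[h]\bigr\rvert>t\Bigr]\le(2m)^d e^{-mt^2},
\]
that is, exponent $mt^2$ and no surviving prefactor. The argument you sketch (ghost sample with $t\mapsto t/2$, Sauer, Hoeffding on the swap signs) yields the bound you propose to cite, $4\Pi_\mathcal{H}(2m)e^{-mt^2/8}$. This is off by a factor $8$ in the exponent: halving $t$ costs a factor $4$, and Hoeffding for the $\{-1,0,1\}$-valued swapped differences costs another $2$. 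Inverting at level $\delta/2$ gives $t=\sqrt{8(d\log(2m)+\log(8/\delta))/m}$. The final term is then $4\sqrt{8(d\log(2m)+\log(8/\delta))/m}$, a factor $2\sqrt2$ too large and with $\log(8/\delta)$ instead of $\log(2/\delta)$. Separately, $\Pi_\mathcal{H}(2m)\le(2m)^d$ is false for $d=1$, where $\Pi_\mathcal{H}(n)$ can equal $n+1$.

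So your ``check whether the constant fits'' has a definite negative answer. The fallback of an unspecified absolute constant proves a strictly weaker statement than the one given.

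To obtain the lemma as stated you need a sharper tool than textbook symmetrization. One option is to cite the two-sample theorem the source relies on. Another is a refined VC inequality such as Vapnik's $\Pr[\sup_h\lvert\hat P_U[h]-\Pr_{\mathcal{D}}[h]\rvert>t]\le4\Pi_\mathcal{H}(2m)e^{-m(t-1/m)^2}$, combined with Sauer's bound $\Pi_\mathcal{H}(2m)\le(2em/d)^d$. Only $t\le1$ is non-trivial, so $e^{-m(t-1/m)^2}\le e^{2}e^{-mt^2}$. The displayed bound then follows whenever $4e^2(e/d)^d\le1$, i.e.\ for $d\ge6$. The case $2m<d$ is vacuous because then $t>1$, and small $d$ must be handled separately.
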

\noindent
At batch size $m \approx 3{,}300$ per domain pair, the second term
is negligible, so empirical divergence minimized during training
reliably tracks the population divergence bounded in
Theorem~\ref{thm:bendavid_app}.

\paragraph{Target error bound.}
\begin{theorem}[Ben-David et al.~\cite{ben2010theory}, Theorem~2]
\label{thm:bendavid_app}
For any $h\in\mathcal{H}$ (VC dimension $d$) and unlabeled samples
$U_S,U_T$ of size $m'$, with probability $\geq 1-\delta$:
\begin{equation}
    \epsilon_T(h) \;\leq\;
    \epsilon_S(h)
    + \tfrac{1}{2}\,\hat{d}_{\mathcal{H}\Delta\mathcal{H}}(U_S,U_T)
    + 4\!\sqrt{\tfrac{2d\log 2m' + \log(2/\delta)}{m'}}
    + \lambda,
    \label{eq:bendavid_app}
\end{equation}
where $\lambda{=}\min_h[\epsilon_S(h){+}\epsilon_T(h)]$ is the ideal
joint error.
\end{theorem}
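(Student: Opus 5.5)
The plan is the classical Ben-David et al.\ argument. It has two halves: a deterministic inequality relating target error to source error through the population divergence $d_{\mathcal{H}\Delta\mathcal{H}}(\mathcal{D}_S,\mathcal{D}_T)$, followed by an appeal to Lemma~\ref{lem:convergence_app} to replace the population divergence with its empirical estimate.

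Throughout, write $\epsilon_D(h,h') = \Pr_{x\sim D}[h(x)\neq h'(x)]$ for the disagreement between two hypotheses, so that $\epsilon_D(h)=\epsilon_D(h,f_D)$ for the labelling function $f_D$. Disagreement under $0$--$1$ loss satisfies the triangle inequality, since $\mathbf{1}[h\neq h''] \le \mathbf{1}[h\neq h'] + \mathbf{1}[h'\neq h'']$ pointwise.

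\textbf{Deterministic bound.}
\begin{enumerate}
\item Let $h^\ast$ attain $\lambda = \min_{h'\in\mathcal{H}}[\epsilon_S(h')+\epsilon_T(h')]$, or come within any $\eta>0$ of it and send $\eta\to 0$.
\item Apply the triangle inequality twice:
\[
\epsilon_T(h)\le \epsilon_T(h^\ast)+\epsilon_T(h,h^\ast)\le \epsilon_T(h^\ast)+\epsilon_S(h,h^\ast)+\bigl|\epsilon_T(h,h^\ast)-\epsilon_S(h,h^\ast)\bigr|,
\]
and then use $\epsilon_S(h,h^\ast)\le \epsilon_S(h)+\epsilon_S(h^\ast)$.
\item Bound the middle term by divergence. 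The disagreement region $\{x: h(x)\neq h^\ast(x)\}$ is a set of the form $\{x : g(x)=1\}$ with $g = h\oplus h^\ast\in\mathcal{H}\Delta\mathcal{H}$. By the definition $d_{\mathcal{A}}(D,D') = 2\sup_{A\in\mathcal{A}}|\Pr_D(A)-\Pr_{D'}(A)|$, we get
\[
\bigl|\epsilon_T(h,h^\ast)-\epsilon_S(h,h^\ast)\bigr| \le \tfrac12 d_{\mathcal{H}\Delta\mathcal{H}}(\mathcal{D}_S,\mathcal{D}_T).
\]
\item Collecting the three steps gives $\epsilon_T(h)\le \epsilon_S(h)+\tfrac12 d_{\mathcal{H}\Delta\mathcal{H}}(\mathcal{D}_S,\mathcal{D}_T)+\lambda$.
\end{enumerate}

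\textbf{Passing to the empirical divergence.} Invoke Lemma~\ref{lem:convergence_app} with the class $\mathcal{H}\Delta\mathcal{H}$ in place of $\mathcal{H}$, using the unlabeled samples $U_S,U_T$ of size $m'$. With probability at least $1-\delta$, this controls $d_{\mathcal{H}\Delta\mathcal{H}}$ by $\hat d_{\mathcal{H}\Delta\mathcal{H}}(U_S,U_T)$ plus a deviation term. Multiplying by $\tfrac12$ and substituting into the deterministic bound yields the displayed inequality \eqref{eq:bendavid_app}.

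\textbf{Main obstacle: complexity of $\mathcal{H}\Delta\mathcal{H}$.} Lemma~\ref{lem:convergence_app} is stated in terms of VC dimension $d$, so I need a complexity bound for $\mathcal{H}\Delta\mathcal{H}$. I would not try to bound its VC dimension directly. Instead, I would use the growth-function form of the uniform-convergence argument underlying the lemma. Since each element of $\mathcal{H}\Delta\mathcal{H}$ is determined by a pair $(h,h')$, its growth function satisfies $\Pi_{\mathcal{H}\Delta\mathcal{H}}(m)\le \Pi_{\mathcal{H}}(m)^2\le (2m)^{2d}$ by Sauer's lemma. Its logarithm is therefore $2d\log(2m')$, which is exactly what turns $d\log(2m)$ into $2d\log 2m'$ in the deviation term.

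The care needed here is bookkeeping the constants: the factor $4$, the halving of the divergence, and the doubled VC term, together with checking that the deviation term arising from the growth-function bound matches the one stated in \eqref{eq:bendavid_app}. The constants I would verify against Ben-David et al.\ rather than re-derive from scratch.
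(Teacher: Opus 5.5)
Your argument is correct and matches the standard proof of Ben-David et al.\ (the paper gives no proof of its own and only cites their Theorem~2): triangle inequality through $h^\ast$, bounding the disagreement gap by $\tfrac12 d_{\mathcal{H}\Delta\mathcal{H}}$, then Lemma~\ref{lem:convergence_app} applied to $\mathcal{H}\Delta\mathcal{H}$. Your one deviation is to control $\mathcal{H}\Delta\mathcal{H}$ via the growth-function bound $\Pi_{\mathcal{H}\Delta\mathcal{H}}\le\Pi_{\mathcal{H}}^2$ instead of Ben-David et al.'s assertion that $\mathrm{VC}(\mathcal{H}\Delta\mathcal{H})\le 2d$, which avoids having to justify that claim; and your concern about constants is unnecessary, because halving Lemma~\ref{lem:convergence_app}'s deviation term gives $2\sqrt{\cdot}\le 4\sqrt{\cdot}$, so the stated bound holds with room to spare.
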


\noindent
Only $\hat{d}_{\mathcal{H}\Delta\mathcal{H}}$ is controllable
through representation learning; this is exactly what the item
discriminator $\psi : \mathbb{S}^{d-1} \to \Delta^{K-1}$ and item
projector $P_i$ target via the minimax game used in the main paper:
\begin{equation}
    \min_{\theta_\psi}\ \max_{\theta_{P_i}}\
    -\frac{1}{K}\sum_{k=1}^{K}\mathbb{E}_{\mathbf{z}\sim P^{k}_{i}}
    \left[\log\psi(\mathbf{z})_{k}\right].
    \label{eq:minimax_main}
\end{equation}

\begin{lemma}[Ben-David et al.~\cite{ben2010theory}, Lemma~2]
\label{lem:pad_app}
For symmetric $\mathcal{H}$ and samples $U_S,U_T$ of size $m$:
\begin{equation}
    \hat{d}_\mathcal{H}(U_S,U_T)
    = 2\Bigl(1-\min_{h\in\mathcal{H}}\Bigl[
        \tfrac{1}{m}\!\sum_{x:h(x)=0}\!\mathbf{I}[x\in U_S]
        +\tfrac{1}{m}\!\sum_{x:h(x)=1}\!\mathbf{I}[x\in U_T]
      \Bigr]\Bigr).
    \label{eq:lemma2_app}
\end{equation}
\end{lemma}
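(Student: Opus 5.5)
The plan is to unfold the definition of the empirical $\mathcal{H}$-divergence and rewrite the bracketed quantity in Eq.~\eqref{eq:lemma2_app} as an empirical classification error. Following Ben-David et al.~\cite{ben2010theory}, for samples $U_S,U_T$ of equal size $m$ we take
\[
\hat{d}_\mathcal{H}(U_S,U_T)=2\sup_{h\in\mathcal{H}}\Bigl|\tfrac{1}{m}\textstyle\sum_{x\in U_S}\mathbf{I}[h(x)=1]-\tfrac{1}{m}\sum_{x\in U_T}\mathbf{I}[h(x)=1]\Bigr|.
\]
Here $h$ is viewed as a domain classifier that outputs label $1$ for ``source'' and $0$ for ``target''.

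\emph{Step 1: rewrite the bracket as an error.} Write $p_S(h)=\frac{1}{m}\sum_{x\in U_S}\mathbf{I}[h(x)=1]$ and $p_T(h)=\frac{1}{m}\sum_{x\in U_T}\mathbf{I}[h(x)=1]$. Because both samples have size $m$, the bracketed term for a fixed $h$ is
\[
\tfrac{1}{m}\textstyle\sum_{x:h(x)=0}\mathbf{I}[x\in U_S]+\tfrac{1}{m}\sum_{x:h(x)=1}\mathbf{I}[x\in U_T]=(1-p_S(h))+p_T(h)=1-\bigl(p_S(h)-p_T(h)\bigr).
\]
This is exactly the empirical error of $h$ in separating $U_S$ from $U_T$. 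Taking the minimum over $h$ gives
\[
\min_h[\cdots]=1-\max_{h\in\mathcal{H}}\bigl(p_S(h)-p_T(h)\bigr).
\]
The minimum is attained because $h$ acts on the finite set $U_S\cup U_T$, so only finitely many labelings are realized. For the same reason the supremum in the definition is a maximum.

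\emph{Step 2: remove the absolute value using symmetry.} This is the only step that needs more than bookkeeping. Symmetry of $\mathcal{H}$ means that $h\in\mathcal{H}$ implies $1-h\in\mathcal{H}$. Since $p_S(1-h)-p_T(1-h)=-(p_S(h)-p_T(h))$, the set of signed differences realized by $\mathcal{H}$ is closed under negation. Hence
\[
\max_h\bigl(p_S(h)-p_T(h)\bigr)=\max_h\bigl|p_S(h)-p_T(h)\bigr|.
\]
Without symmetry, only the inequality ``$\le$'' would hold.

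\emph{Step 3: combine.} Substituting Step 2 into Step 1 gives
\[
2\Bigl(1-\min_h[\cdots]\Bigr)=2\max_h\bigl|p_S(h)-p_T(h)\bigr|=\hat{d}_\mathcal{H}(U_S,U_T),
\]
which is Eq.~\eqref{eq:lemma2_app}.

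The main point needing care is keeping the label conventions consistent: source must correspond to output $1$ so that the bracket is an error rather than an accuracy. The role of equal sample sizes $m$ is that one normalization $1/m$ serves both sums, which is what makes the identity in Step 1 exact.
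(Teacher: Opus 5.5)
Your proof is correct and is the standard verification. The paper gives no argument of its own for this lemma beyond citing Ben-David et al., and your computation (bracket $=1-(p_S(h)-p_T(h))$, then closure of $\mathcal{H}$ under $h\mapsto 1-h$ to remove the absolute value) is exactly what underlies it. One refinement of wording: the bracket is the sum of the two class-conditional error rates, i.e.\ twice the pooled error $\varepsilon$ over the $2m$ points rather than ``the empirical error'' itself, which is why the paper's follow-up correctly reads $\hat{d}_\mathcal{H}=2(1-2\varepsilon)$.
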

\noindent
This minimum equals the error $\varepsilon$ of the best domain
classifier, giving $\hat{d}_\mathcal{H}=2(1-2\varepsilon)$. At
chance accuracy $\varepsilon \to (K{-}1)/K$, $\hat{d}_\mathcal{H} \to
0$, tightening the bound in Eq.~\eqref{eq:bendavid_app}. This is
precisely the quantity our Proxy $\mathcal{A}$-distance diagnostic
in the main text is designed to estimate.

\paragraph{GAN background and $K$-way adaptation.}
The binary GAN minimax
game~\cite{goodfellow2014generativeadversarialnetworks}:
\begin{equation}
    \min_G\max_D V(D,G)
    =\mathbb{E}_{x\sim p_{\mathrm{data}}}[\log D(x)]
    +\mathbb{E}_{z\sim p_z}[\log(1-D(G(z)))].
    \label{eq:gan_main_app}
\end{equation}
We have no generator and no fake data: $P_i$ projects real SBERT
encodings. We replace the binary real/fake objective with $K$-way
cross-entropy over $K$ real source-domain distributions:
\begin{equation}
    \mathcal{L}_{\mathrm{dom}}(\theta_\psi,\theta_{P_i})
    = -\frac{1}{K}\sum_{k=1}^K
    \mathbb{E}_{\mathbf{z}\sim P^k_i(\theta_{P_i})}
    [\log\psi(\mathbf{z};\theta_\psi)_k].
    \label{eq:disc_app}
\end{equation}
This is exactly Eq.~\eqref{eq:minimax_main} written as a single
loss to be minimized by $\psi$ and maximized by $P_i$.

\begin{proposition}[Goodfellow et al.~\cite{goodfellow2014generativeadversarialnetworks},
  Proposition~1, adapted to $K$ domains]
\label{prop:optimal_disc_app}
Fix $P_i$. The discriminator $\psi^*$ maximizing $\mathcal{L}_{\mathrm{dom}}$
is
\begin{equation}
    \psi^*(\mathbf{z})_k
    = \frac{p_k(\mathbf{z})}{\sum_{j=1}^{K}p_j(\mathbf{z})},
    \quad k=1,\ldots,K,
    \label{eq:optimal_disc_app}
\end{equation}
where $p_k(\mathbf{z})$ is the density of domain-$k$ item
embeddings at $\mathbf{z}$. When all $K$ distributions coincide,
$\psi^*(\mathbf{z})_k=1/K$: the saddle point of the minimax game
in Eq.~\eqref{eq:minimax_main}.
\end{proposition}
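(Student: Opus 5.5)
The plan is to treat the discriminator as a nonparametric function $\psi:\mathbb{S}^{d-1}\to\Delta^{K-1}$ and optimize the objective pointwise, as in Goodfellow et al.'s Proposition~1.

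First, a sign convention. The discriminator's best response is the $\psi$ that \emph{minimizes} the cross-entropy $\mathcal{L}_{\mathrm{dom}}$ in Eq.~\eqref{eq:disc_app}, equivalently maximizes $-\mathcal{L}_{\mathrm{dom}}$. This matches the $\min_{\theta_\psi}$ in Eq.~\eqref{eq:minimax_main}. I will read ``maximizing'' in the statement as maximizing the expected log-likelihood $-\mathcal{L}_{\mathrm{dom}}$.

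\textbf{Step 1: reduce to a pointwise problem.} Fix $P_i$ and assume each domain's item-embedding law has a density $p_k$ with respect to a common base measure $\nu$ on the sphere (e.g.\ $\nu=\sum_j P^j_i$, so densities always exist). Then
\[
-\mathcal{L}_{\mathrm{dom}} = \frac{1}{K}\int \sum_{k=1}^K p_k(\mathbf{z})\log\psi(\mathbf{z})_k\,d\nu(\mathbf{z}).
\]
Because $\psi$ is unrestricted, it suffices to maximize the integrand separately for each $\mathbf{z}$. At a fixed $\mathbf{z}$ with $S(\mathbf{z})=\sum_j p_j(\mathbf{z})>0$, the problem is to maximize $\sum_k a_k\log y_k$ over the simplex $y\in\Delta^{K-1}$, with weights $a_k=p_k(\mathbf{z})\ge 0$.

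\textbf{Step 2: solve the pointwise problem.} Normalize $q_k=a_k/S$. Then
\[
\sum_k a_k\log y_k = S\Bigl[\sum_k q_k\log q_k-\mathrm{KL}(q\,\|\,y)\Bigr].
\]
By Gibbs' inequality this is maximized uniquely at $y=q$, which gives $\psi^*(\mathbf{z})_k=p_k(\mathbf{z})/\sum_j p_j(\mathbf{z})$. Alternatively, a Lagrange multiplier for $\sum_k y_k=1$ gives $a_k/y_k=\text{const}$, which leads to the same answer. On the set where $S=0$ the integrand vanishes, so $\psi^*$ may be defined arbitrarily there (say uniform). Hence $\psi^*$ is unique $\nu$-a.e.

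\textbf{Step 3: the coincident case and the saddle point.} If $p_1=\cdots=p_K$, then immediately $\psi^*_k=1/K$. To justify calling this the saddle, substitute $\psi^*$ back in:
\[
\mathcal{L}_{\mathrm{dom}}(\psi^*,P_i)=\log K-\mathrm{JSD}_{1/K}(p_1,\dots,p_K),
\]
where $\mathrm{JSD}_{1/K}$ is the generalized Jensen--Shannon divergence with uniform weights. Since the generalized JSD is nonnegative and vanishes iff all $p_k$ coincide, the projector's maximization of $\min_\psi\mathcal{L}_{\mathrm{dom}}$ attains its largest value $\log K$ exactly when the domain distributions agree. At that point $\psi^*\equiv 1/K$.

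The only delicate step is Step~3's identification with the JSD: expanding $\sum_k p_k\log\bigl(p_k/\sum_j p_j\bigr)$ against the mixture $\bar p=\frac{1}{K}\sum_j p_j$ and tracking the $\log K$ constants. Steps~1 and~2 are routine. I would also flag that the argument is nonparametric: a finite-capacity $\psi$ only approximates $\psi^*$.
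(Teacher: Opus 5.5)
Your proof is correct and follows the paper's route: you reduce to maximizing $\sum_k p_k(\mathbf{z})\log\psi(\mathbf{z})_k$ pointwise over the simplex. Your Gibbs'-inequality/KL decomposition stands in for the paper's concavity-plus-Lagrangian argument and gives uniqueness and the boundary cases $p_k(\mathbf{z})=0$ directly; your reading of ``maximizing'' as maximizing $-\mathcal{L}_{\mathrm{dom}}$ also matches the paper's sketch. Your Step~3 identity $\mathcal{L}_{\mathrm{dom}}(\psi^*,P_i)=\log K-\mathrm{JSD}_{1/K}(p_1,\dots,p_K)$ goes beyond the paper, which only asserts the saddle-point claim; to make it a full saddle point, add that against $\psi\equiv 1/K$ the loss equals $\log K$ for every $P_i$, so the projector has no profitable deviation either.
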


\noindent
\textit{Proof sketch.} For fixed $P_i$, $\psi$ maximizes
$\sum_k p_k(\mathbf{z})\log\psi(\mathbf{z})_k$ pointwise over the
simplex; concavity of $\log$ and the Lagrangian first-order
conditions give $\psi^*_k \propto p_k$, yielding
Eq.~\eqref{eq:optimal_disc_app} after normalization --- the direct
$K$-class extension of Proposition~1
in~\cite{goodfellow2014generativeadversarialnetworks}. \hfill$\square$

\noindent
Our single $K$-way discriminator differs from
MADA~\cite{pei2018multi}, which deploys $K$ class-wise binary
discriminators for single source--target adaptation; ours aligns
$K$ source domains simultaneously.

\paragraph{GRL equivalence.}
The minimax game in Eq.~\eqref{eq:minimax_main} naively requires
two alternating gradient passes per step. A gradient reversal layer
(Ganin et al.~\cite{pmlr-v37-ganin15}, Eqs.~4--9) collapses this
into one, since
\begin{equation}
    \frac{\partial\mathcal{L}_{\mathrm{adv}}}{\partial\theta_{P_i}}
    = -\lambda\,
    \frac{\partial\mathcal{L}_{\mathrm{dom}}}{\partial\theta_{P_i}}.
    \label{eq:grad_flip_app}
\end{equation}
This is exactly the mechanism behind the adversarial loss used in
the main paper:
\begin{equation}
    \mathcal{L}_{\mathrm{adv}}
    = -\frac{1}{K}\sum_{k=1}^{K}
    \mathbb{E}_{\mathbf{z}\sim P^{k}_{i}}
    \left[\log\psi\!\left(\mathrm{GRL}_{\lambda}(\mathbf{z})\right)_{k}\right].
    \label{eq:adv_main}
\end{equation}

\begin{proposition}[Ganin et al.~\cite{pmlr-v37-ganin15}, Eqs.~4--9]
\label{prop:grl_equivalence_app}
A descent step on $(\theta_\psi,\theta_{P_i})$ using
$\nabla\mathcal{L}_{\mathrm{adv}}$ implements simultaneously:
(i) descent on $\theta_\psi$; (ii) ascent on $\theta_{P_i}$.
\end{proposition}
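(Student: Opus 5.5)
The plan is to compute the gradient of $\mathcal{L}_{\mathrm{adv}}$ in Eq.~\eqref{eq:adv_main} by the chain rule, block by block, using only the defining properties of $\mathrm{GRL}_\lambda$. In the forward pass the layer is the identity, and in the backward pass its Jacobian is $-\lambda\mathbf{I}$. Write $\mathbf{z}=P_i(\mathbf{e};\theta_{P_i})/\|P_i(\mathbf{e};\theta_{P_i})\|_2$ and $\tilde{\mathbf{z}}=\mathrm{GRL}_\lambda(\mathbf{z})$. Since $\tilde{\mathbf{z}}=\mathbf{z}$ numerically, the loss value of $\mathcal{L}_{\mathrm{adv}}$ equals $\mathcal{L}_{\mathrm{dom}}(\theta_\psi,\theta_{P_i})$ from Eq.~\eqref{eq:disc_app}. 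I will record this first, because both gradient comparisons are made against $\mathcal{L}_{\mathrm{dom}}$.

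\emph{Step (i): the discriminator block.} The parameters $\theta_\psi$ enter only after the GRL, so the backward pass to $\theta_\psi$ never traverses the reversal. This gives $\partial\mathcal{L}_{\mathrm{adv}}/\partial\theta_\psi=\partial\mathcal{L}_{\mathrm{dom}}/\partial\theta_\psi$. A descent step on $\nabla\mathcal{L}_{\mathrm{adv}}$ is therefore an ordinary descent step on the cross-entropy for $\psi$. This is the $\min_{\theta_\psi}$ side of Eq.~\eqref{eq:minimax_main}, whose optimum is characterized by Proposition~\ref{prop:optimal_disc_app}.

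\emph{Step (ii): the projector block.} Per sample, the chain rule gives
\[
\frac{\partial \ell}{\partial\theta_{P_i}}
=\frac{\partial \ell}{\partial\tilde{\mathbf{z}}}\,
\frac{\partial\tilde{\mathbf{z}}}{\partial\mathbf{z}}\,
\frac{\partial\mathbf{z}}{\partial\theta_{P_i}}
=-\lambda\,\frac{\partial \ell}{\partial\mathbf{z}}\,\frac{\partial\mathbf{z}}{\partial\theta_{P_i}}.
\]
The normalization Jacobian sits inside $\partial\mathbf{z}/\partial\theta_{P_i}$ and is unaffected. Averaging over samples and over the $K$ domains (the expectation is linear, so we exchange it with differentiation for the empirical mini-batch) yields Eq.~\eqref{eq:grad_flip_app}. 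The update $\theta_{P_i}\leftarrow\theta_{P_i}-\eta\nabla_{\theta_{P_i}}\mathcal{L}_{\mathrm{adv}}=\theta_{P_i}+\eta\lambda\nabla_{\theta_{P_i}}\mathcal{L}_{\mathrm{dom}}$ is then a gradient ascent step of size $\eta\lambda$ on $\mathcal{L}_{\mathrm{dom}}$. That is the $\max_{\theta_{P_i}}$ side. Because both blocks are updated from a single backward pass, the two updates happen simultaneously, which gives the claim.

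The only delicate point is interpretive rather than computational. $\mathcal{L}_{\mathrm{adv}}$ is not a genuine scalar function whose gradient field has these blocks, since the GRL is defined by its backward rule. The statement must therefore be read as being about the update vector produced by backpropagation, not about the gradient of a single objective. I will state this explicitly, and I will note that the claim is about one-step directions, not convergence of the simultaneous descent--ascent dynamics to the saddle point of Proposition~\ref{prop:optimal_disc_app}. Beyond that, I will remark that other loss terms sharing $\theta_{P_i}$ (for example $\mathcal{L}_{\mathrm{bpr}}$) add their ordinary gradients without interacting with the reversed term, by linearity of the gradient of the total loss in Eq.~\eqref{eq:total-loss}.
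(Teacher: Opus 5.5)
Your proposal is correct and follows essentially the same route as the paper's proof. The chain rule through $\mathrm{GRL}_\lambda$ puts the factor $-\lambda$ on the $\theta_{P_i}$ block (Eq.~\eqref{eq:grad_flip_app}), turning descent into ascent on $\mathcal{L}_{\mathrm{dom}}$; the $\theta_\psi$ block is unchanged because the layer is the identity forward and its reversal is never traversed; and a single backward pass produces both updates. Your added remarks, that $\mathcal{L}_{\mathrm{adv}}$ is only a pseudo-objective so the claim is about the backpropagated update vector, and that it says nothing about convergence of the simultaneous descent--ascent dynamics, are sound refinements the paper leaves implicit.
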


\noindent
\textit{Proof.} By the chain rule through $\mathrm{GRL}_\lambda$
(Eq.~\eqref{eq:grad_flip_app}): descent on $\theta_{P_i}$ via
$\mathcal{L}_{\mathrm{adv}}$ becomes ascent on
$\mathcal{L}_{\mathrm{dom}}$. For $\theta_\psi$, GRL is
the identity forward, so
$\partial\mathcal{L}_{\mathrm{adv}}/\partial\theta_\psi
=\partial\mathcal{L}_{\mathrm{dom}}/\partial\theta_\psi$. Both
directions execute in one backward pass, exactly as implemented in
Eq.~\eqref{eq:adv_main}. \hfill$\square$

\paragraph{Scope and novelty.}
These theorems were originally derived for classification
(Ben-David et al.), generative modeling (Goodfellow et al.), and
single source--target visual adaptation (Ganin et al.). We adapt
them to multi-source zero-shot recommendation, where the downstream
task is ranking and alignment operates on projected item
embeddings rather than raw input features. To our knowledge, this
is the first application of the Ben-David divergence framework to
item-space alignment in recommendation. 

\FloatBarrier
\subsection{Modality-Dropout Ablation}
\label{app:modality_dropout}

To assess whether the collaborative-to-zero distribution shift at
zero-shot inference affects performance, we train a variant of
\textsc{Atlas} with modality dropout: during training, the
collaborative component $e_{u,c}^k$ is replaced with a zero vector
with probability $p_{\mathrm{drop}}=0.2$ before concatenation,
exposing $P_u$ to the same zero-imputed input distribution used at
zero-shot inference. Table~\ref{tab:modality_dropout_app} compares
zero-shot performance with and without modality dropout, and
Table~\ref{tab:modality_dropout_source_app} compares performance on
the five source domains.

\begin{table}[!htbp]
\centering
\small
\caption{Zero-shot performance with and without modality dropout on target domains.}
\label{tab:modality_dropout_app}
\begin{tabular}{l@{\hspace{4ex}}cc@{\hspace{4ex}}cc}
\toprule
& \multicolumn{2}{c}{No Dropout} & \multicolumn{2}{c}{With Dropout} \\
\cmidrule(lr){2-3} \cmidrule(lr){4-5}
\textbf{Target Domain} & HR & NDCG & HR & NDCG \\
\midrule
Digital Music        & .0173 & .0089 & .0191 & .0113 \\
Cell Phones          & .0023 & .0012 & .0027 & .0013 \\
Arts \& Crafts       & .0067 & .0035 & .0069 & .0037 \\
Baby Products        & .0042 & .0022 & .0042 & .0022 \\
Books                & .0055 & .0027 & .0063 & .0031 \\
Musical Instruments  & .0061 & .0038 & .0057 & .0035 \\
Office Products      & .0026 & .0012 & .0031 & .0015 \\
Appliances           & .0113 & .0059 & .0103 & .0053 \\
Software             & .0073 & .0043 & .0074 & .0043 \\
Pet Supplies         & .0009 & .0003 & .0006 & .0002 \\
\bottomrule
\end{tabular}
\end{table}

\begin{table}[!htbp]
\centering
\small
\caption{Source-domain performance with and without modality dropout.}
\label{tab:modality_dropout_source_app}
\begin{tabular}{l@{\hspace{4ex}}cc@{\hspace{4ex}}cc}
\toprule
& \multicolumn{2}{c}{No Dropout} & \multicolumn{2}{c}{With Dropout} \\
\cmidrule(lr){2-3} \cmidrule(lr){4-5}
\textbf{Source Domain} & HR & NDCG & HR & NDCG \\
\midrule
Beauty        & .0870 & .0490 & .0692 & .0436 \\
Automotive    & .0532 & .0332 & .0441 & .0303 \\
Movies \& TV  & .1388 & .0774 & .1181 & .0701 \\
Video Games   & .2116 & .1561 & .1777 & .1250 \\
Electronics   & .0687 & .0443 & .0552 & .0394 \\
\bottomrule
\end{tabular}
\end{table}

\noindent
Modality dropout yields a marginal improvement in zero-shot
performance, consistent with the hypothesis that exposing $P_u$ to
the zero-imputed input setting during training slightly narrows the
train--inference distribution gap. However, this comes at a
noticeable cost to source-domain performance, since randomly
ablating the collaborative signal during training reduces the
effective training signal available for in-domain recommendation.
Given this trade-off, the gain in the zero-shot setting is
small relative to the loss on source domains.

\FloatBarrier
\subsection{RDG Setting Overview}
\label{app:rdg_overview}

Figure~\ref{fig:rdg_overview_app} contrasts Recommendation Domain
Generalization with existing recommendation transfer paradigms
(domain-specific, cross-domain, and universal/foundation-model
recommendation), highlighting that RDG is the only setting in which
a single frozen model, trained without any target-domain data, is
deployed directly to multiple completely unseen recommendation
domains.

\begin{figure}[!htbp]
\centering
\includegraphics[width=\linewidth]{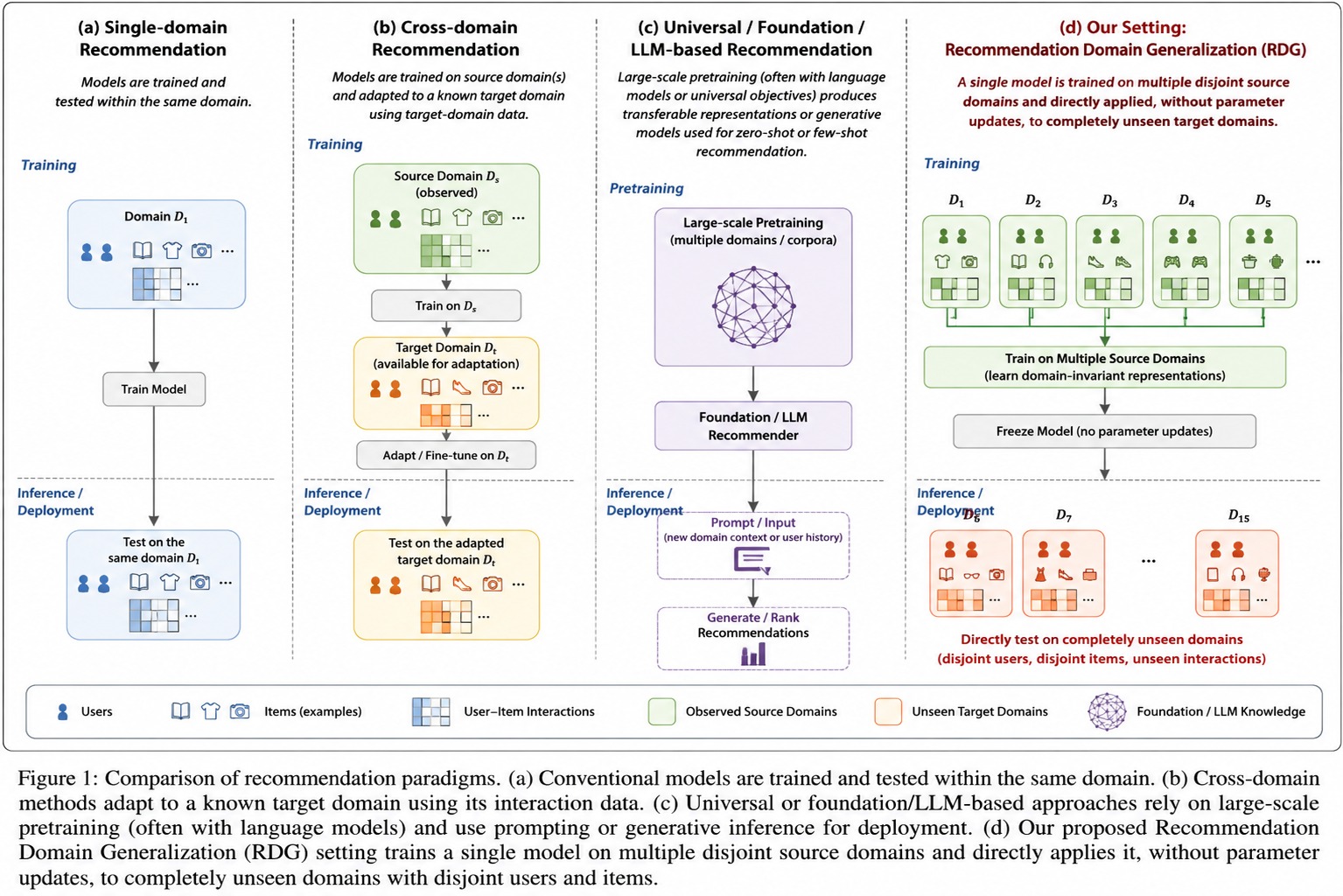}
\caption{}
\label{fig:rdg_overview_app}
\end{figure}

\FloatBarrier
\subsection{Full Experimental Results}
\label{app:full_results}

\subsubsection{Dataset Statistics}
\label{app:datasets}

\begin{table}[!htbp]
\centering
\small
\caption{Dataset statistics (Amazon Reviews 2023, filtered to
$\geq 5$ interactions per user).}
\label{tab:datasets_app}
\begin{tabular}{l@{\hspace{4ex}}rrr}
\toprule
\textbf{Domain} & \textbf{\#Users} & \textbf{\#Items} & \textbf{Avg. Ints.} \\
\midrule
\multicolumn{4}{l}{\textit{Source Domains (Training)}} \\
\midrule
Beauty \& Personal Care & 588,804 & 104,891 & 54.60 \\
Automotive              & 489,076 & 160,114 & 41.20 \\
Movies \& TV            & 726,286 & 117,180 & 76.80 \\
Video Games             & 117,742 &  43,023 & 17.90 \\
Electronics             & 881,540 & 114,302 & 63.20 \\
\midrule
\textbf{Total} & \textbf{2,803,448} & \textbf{539,510} & -- \\
\midrule
\multicolumn{4}{l}{\textit{Target Domains (Zero-Shot Evaluation)}} \\
\midrule
Digital Music        &   1,465 &  13,655 & 10.70 \\
Cell Phones          & 308,090 & 128,829 & 22.90 \\
Arts \& Crafts        & 302,491 &  98,259 & 30.70 \\
Baby Products         & 184,851 &  87,474 & 19.60 \\
Books                 & 237,965 & 298,412 & 48.30 \\
Musical Instruments   &  84,574 &  89,523 & 16.90 \\
Office Products       & 332,744 & 153,821 & 24.20 \\
Appliances            &   9,918 &  20,503 &  6.40 \\
Software              & 160,256 &  41,200 & 20.70 \\
Pet Supplies          & 331,119 & 143,272 & 33.60 \\
\midrule
\textbf{Total} & \textbf{1,953,473} & \textbf{1,074,948} & -- \\
\bottomrule
\end{tabular}
\end{table}

\FloatBarrier
\subsubsection{Codebook vs.\ Projected Embeddings: Full Results}
\label{app:zrec_zproj}

\begin{table}[!htbp]
\centering
\small
\caption{HR@10 comparison of continuous projection $z_{\mathrm{proj}}$
vs.\ codebook-reconstructed $z_{\mathrm{rec}}$ across all ten unseen
target domains.}
\label{tab:zrec_vs_zproj_app}
\begin{tabular}{l@{\hspace{4ex}}cc}
\toprule
\textbf{Target Domain} & $z_{\mathrm{proj}}$ & $z_{\mathrm{rec}}$ \\
\midrule
Digital Music         & 0.0155 & \textbf{0.0173} \\
Cell Phones           & 0.0009 & \textbf{0.0023} \\
Arts \& Crafts         & 0.0039 & \textbf{0.0067} \\
Baby Products          & 0.0023 & \textbf{0.0042} \\
Books                  & 0.0012 & \textbf{0.0055} \\
Musical Instruments    & 0.0024 & \textbf{0.0061} \\
Office Products        & 0.0017 & \textbf{0.0026} \\
Appliances             & 0.0085 & \textbf{0.0113} \\
Software               & 0.0043 & \textbf{0.0073} \\
Pet Supplies           & 0.0006 & \textbf{0.0009} \\
\midrule
\textbf{Average} & 0.0041 & \textbf{0.0064} \\
\bottomrule
\end{tabular}
\end{table}

\FloatBarrier
\subsubsection{Ablation on Alignment Objectives: Full Results}
\label{app:ablation_full}

\begin{table}[htbp]
\centering
\small
\caption{Ablation on alignment objectives (HR@10) across seen (source) and unseen (target) domains.}
\label{tab:ablation_full_app}
\setlength{\tabcolsep}{3pt} 
\begin{tabular}{l ccccc c cccccccccc}
\toprule
& \multicolumn{5}{c}{\textit{Seen (Source)}} & \multicolumn{10}{c}{\textit{Unseen (Target)}} \\
\cmidrule(lr){2-6} \cmidrule(lr){7-16}
\textbf{Configuration} & Mov. & Gam. & Auto. & Elec. & Beau. & Cell & Mus. & Arts & Baby & Books & Inst. & Off. & Appl. & Soft. & Pet \\
\midrule
BPR only              & .0332 & .0941 & .0074 & .0088 & .0205 & .0004 & .0061 & .0018 & .0014 & .0020 & .0029 & .0011 & .0041 & .0021 & .0004 \\
BPR + $\mathcal{L}_\mathrm{adv}$ & .0711 & .1421 & .0396 & .0344 & .0761 & .0002 & .0072 & .0053 & .0009 & .0042 & .0044 & .0020 & .0066 & .0061 & .0007 \\
BPR + $\mathcal{L}_\mathrm{GW}$  & .0420 & .1893 & .0203 & .0442 & .0491 & .0009 & .0107 & .0021 & .0032 & .0033 & .0041 & .0015 & .0091 & .0036 & .0004 \\
\midrule
\textsc{Atlas} (full) & \textbf{.1388} & \textbf{.2116} & \textbf{.0532} & \textbf{.0687} & \textbf{.0870} & \textbf{.0023} & \textbf{.0173} & \textbf{.0067} & \textbf{.0042} & \textbf{.0055} & \textbf{.0061} & \textbf{.0026} & \textbf{.0113} & \textbf{.0073} & \textbf{.0009} \\
\bottomrule
\end{tabular}
\end{table}

\FloatBarrier
\subsubsection{Source-Domain Diversity Effect: Full Results}
\label{app:diversity_full}

\begin{table}[htbp]
\centering
\small
\caption{Source-domain diversity effect (HR@10) as $K$ increases from 1 to 5. ($M$=Movies, $E$=Electronics, $B$=Beauty, $V$=Video Games, $A$=Automotive).}
\label{tab:diversity_full_app}
\setlength{\tabcolsep}{2.5pt}
\begin{tabular}{cl ccccc c cccccccccc}
\toprule
& & \multicolumn{5}{c}{\textit{Seen (Source)}} & \multicolumn{10}{c}{\textit{Unseen (Target)}} \\
\cmidrule(lr){3-7} \cmidrule(lr){8-17}
$K$ & Sources & Mov. & Elec. & Beau. & Gam. & Auto. & Mus. & Cell & Arts & Baby & Books & Inst. & Off. & Appl. & Soft. & Pet \\
\midrule
1 & M       & .2359 & --    & --    & --    & --    & .0028 & .0001 & .0001 & .0001 & .0002 & .0001 & .0001 & .0009 & .0006 & .0001 \\
2 & ME      & .0103 & .0180 & --    & --    & --    & .0071 & .0005 & .0011 & .0005 & .0004 & .0008 & .0001 & .0027 & .0013 & .0004 \\
3 & MEB     & .0286 & .0251 & .0296 & --    & --    & .0112 & .0009 & .0020 & .0014 & .0016 & .0017 & .0007 & .0057 & .0028 & .0002 \\
4 & MEBV    & .0575 & .0399 & .0434 & .1883 & --    & .0136 & .0015 & .0040 & .0034 & .0038 & .0041 & .0014 & .0072 & .0056 & .0007 \\
5 & MEBVA   & \textbf{.1388} & \textbf{.0687} & \textbf{.0870} & \textbf{.2116} & \textbf{.0532} & \textbf{.0173} & \textbf{.0023} & \textbf{.0067} & \textbf{.0042} & \textbf{.0055} & \textbf{.0061} & \textbf{.0026} & \textbf{.0113} & \textbf{.0073} & \textbf{.0009} \\
\bottomrule
\end{tabular}
\end{table}

\FloatBarrier
\subsubsection{Sensitivity to Random Initialization}
\label{app:seeds}

To assess the stability of our headline results, we train
\textsc{Atlas}$_\text{ZS}$ across three random seeds (42, 37, 61;
seed 42 is the main paper configuration) and report per-seed values
together with the mean $\pm$ standard deviation per domain. Standard
deviations are consistently small relative to the mean, indicating
that \textsc{Atlas}'s reported gains are stable across
initializations. Baseline results follow their original single-run
reporting protocols, consistent with prior work in this setting.

\begin{table}[htbp]
\centering
\small
\caption{\textsc{Atlas}$_\text{ZS}$ HR@10 / NDCG@10 across random seeds, with mean $\pm$ std.}
\label{tab:seeds_zs_raw_app}
\setlength{\tabcolsep}{5pt}
\begin{tabular}{l cc cc cc cc}
\toprule
& \multicolumn{2}{c}{Seed 42} & \multicolumn{2}{c}{Seed 37} & \multicolumn{2}{c}{Seed 61} & \multicolumn{2}{c}{Mean $\pm$ Std} \\
\cmidrule(lr){2-3} \cmidrule(lr){4-5} \cmidrule(lr){6-7} \cmidrule(lr){8-9}
\textbf{Target Domain} & HR & NDCG & HR & NDCG & HR & NDCG & HR & NDCG \\
\midrule
Digital Music       & .0173 & .0089 & .0181 & .0092 & .0159 & .0083 & $.0171\pm.0009$ & $.0088\pm.0004$ \\
Cell Phones         & .0023 & .0012 & .0019 & .0012 & .0022 & .0012 & $.0021\pm.0002$ & $.0012\pm.0000$ \\
Arts \& Crafts      & .0067 & .0035 & .0063 & .0033 & .0061 & .0032 & $.0064\pm.0002$ & $.0033\pm.0001$ \\
Baby Products       & .0042 & .0022 & .0045 & .0024 & .0040 & .0021 & $.0042\pm.0002$ & $.0022\pm.0001$ \\
Books               & .0055 & .0027 & .0059 & .0029 & .0053 & .0025 & $.0056\pm.0002$ & $.0027\pm.0002$ \\
Musical Instruments & .0061 & .0038 & .0064 & .0040 & .0059 & .0035 & $.0061\pm.0002$ & $.0038\pm.0002$ \\
Office Products     & .0026 & .0012 & .0029 & .0015 & .0025 & .0011 & $.0027\pm.0002$ & $.0013\pm.0002$ \\
Appliances          & .0113 & .0059 & .0101 & .0055 & .0115 & .0057 & $.0110\pm.0006$ & $.0057\pm.0002$ \\
Software            & .0073 & .0043 & .0077 & .0045 & .0072 & .0042 & $.0074\pm.0002$ & $.0043\pm.0001$ \\
Pet Supplies        & .0009 & .0003 & .0009 & .0003 & .0008 & .0003 & $.0009\pm.0000$ & $.0003\pm.0000$ \\
\bottomrule
\end{tabular}
\end{table}

\begin{table*}[!htbp]
\centering
\small
\caption{\textsc{Atlas} in-domain HR@10 / NDCG@10 for each individual seed, with mean $\pm$ std across all three.}
\label{tab:seeds_source_raw_app}
\resizebox{\textwidth}{!}{
\begin{tabular}{l@{\hspace{1.5em}}cc@{\hspace{1.5em}}cc@{\hspace{1.5em}}cc@{\hspace{2em}}cc}
\toprule
& \multicolumn{2}{c}{Seed 42} & \multicolumn{2}{c}{Seed 37} & \multicolumn{2}{c}{Seed 61} & \multicolumn{2}{c}{Mean $\pm$ Std} \\
\cmidrule(lr){2-3} \cmidrule(lr){4-5} \cmidrule(lr){6-7} \cmidrule(lr){8-9}
\textbf{Source Domain} & HR & NDCG & HR & NDCG & HR & NDCG & HR & NDCG \\
\midrule
Beauty         & .0870 & .0490 & .0895 & .0497 & .0873 & .0490 & $.0879\pm.0011$ & $.0492\pm.0003$ \\
Automotive     & .0532 & .0332 & .0497 & .0318 & .0527 & .0328 & $.0519\pm.0015$ & $.0326\pm.0006$ \\
Movies \& TV   & .1388 & .0774 & .1504 & .0811 & .1401 & .0787 & $.1431\pm.0052$ & $.0791\pm.0015$ \\
Video Games    & .2116 & .1561 & .1974 & .1394 & .1941 & .1366 & $.2010\pm.0076$ & $.1440\pm.0086$ \\
Electronics    & .0687 & .0443 & .0814 & .0572 & .0618 & .0404 & $.0706\pm.0081$ & $.0473\pm.0072$ \\
\bottomrule
\end{tabular}
}
\end{table*}

\FloatBarrier
\subsection{Hyperparameters}
\label{app:hyperparams}

\begin{table}[!htbp]
\centering
\small
\caption{Architecture and training hyperparameters.}
\label{tab:hyperparams_app}
\begin{tabular}{l@{\hspace{4ex}}l}
\toprule
\textbf{Component} & \textbf{Value} \\
\midrule
User input dim.\ ($e_u^k$)     & 768 (384 CF + 384 SBERT) \\
Item input dim.\ ($e_i^k$)     & 384 (SBERT only) \\
Projection dim.\    & 384 \\
RVQ levels ($L$)               & 3 \\
User codebook sizes  & 512, 1024, 2048 \\
Item codebook sizes  & 512, 1024, 2048 \\
\midrule
Epochs                          & 4000  \\
Batch size                      & 16{,}384 \\
Optimizer                       & Adam \\
Learning rate                   & $1\times10^{-3}$ \\
Warmup epochs                   & 30 \\
Random seed (main)              & 42 \\
\midrule
Softmax temperature ($\tau_{\mathrm{rec}}$) & 0.20 \\
Sinkhorn iterations             & 5 \\
Sinkhorn entropy reg.\          & 0.05 \\
VQ temperature  &  0.25 \\
\midrule
$\lambda_{\mathrm{bpr}}$         & 1.7 \\
$\lambda_{\mathrm{commit}}$           & 0.25 \\
$\lambda_{\mathrm{book}}$             & 2.0 \\
$\lambda_{\mathrm{ent}}$              & 0.05 \\
$\lambda_{\mathrm{gw}}$               & 2.0 \\
$\lambda_{\mathrm{adv}}$  & 2.0 \\
$\lambda_{\mathrm{centroid}}$         & 2.0 \\
$\lambda_{\mathrm{user\_div}}$        & 1.0 \\
$\lambda_{\mathrm{var}}$              & 1.0 \\
\bottomrule
\end{tabular}
\end{table}

\FloatBarrier
\subsection{Representational Diagnostics}
\label{app:repr_diagnostics}

\textbf{Note:} Extended visualizations of these diagnostics
across all training checkpoints are provided in the supplementary
media file accompanying this submission.

\subsubsection{Codebook Utilization}
\label{app:codebook_util}

Figure~\ref{fig:codebook_hist_app} shows the distribution of
codeword assignment frequencies for the user and item codebooks at
the final training checkpoint. Both codebooks exhibit healthy
utilization: assignment mass is spread across the majority of
available codewords at each level rather than concentrating on a
small subset, indicating that the entropy penalty
$\mathcal{L}_{\mathrm{ent}}$ successfully prevents codebook collapse
during training.

\begin{figure}[!htbp]
\centering
\includegraphics[width=\linewidth]{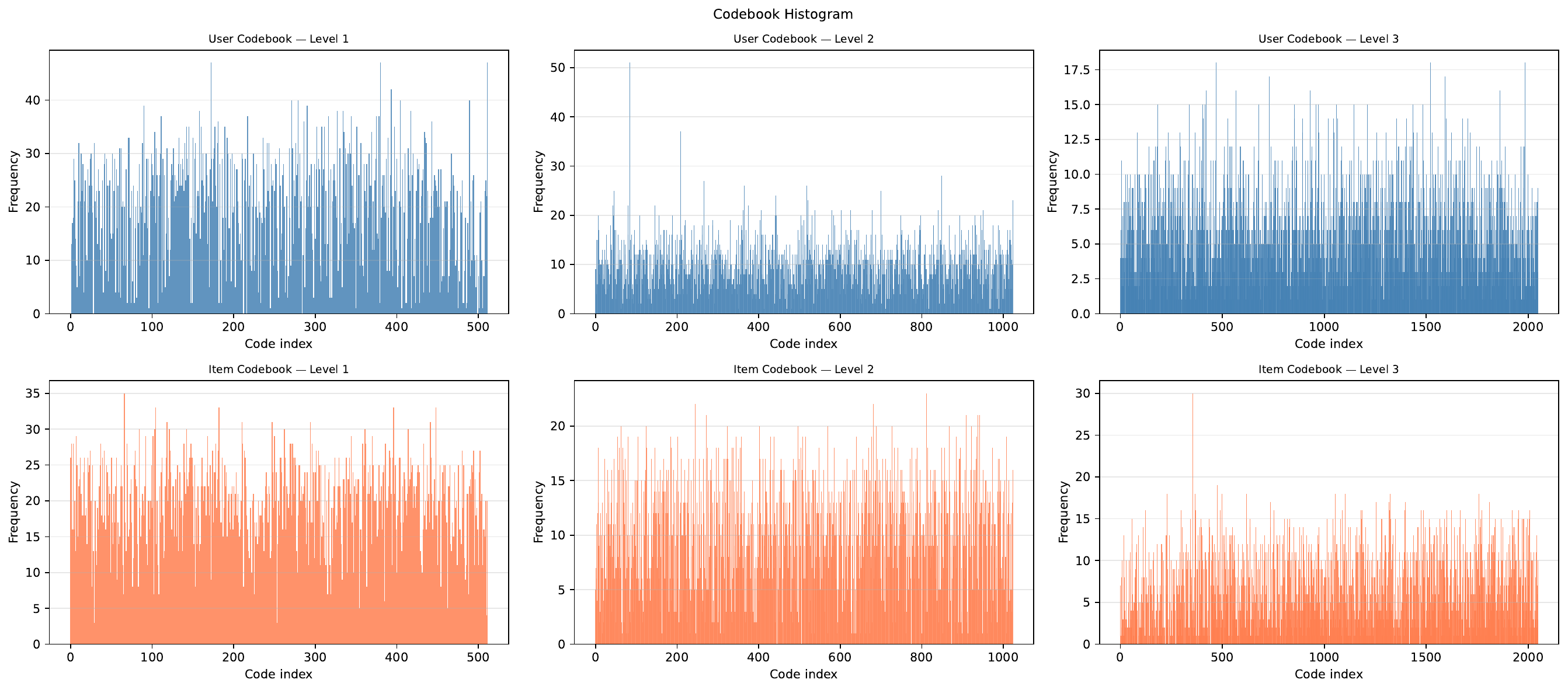}
\caption{Codebook assignment frequency histograms for user (top) and
item (bottom) codebooks, across levels $L_1$, $L_2$, $L_3$.}
\label{fig:codebook_hist_app}
\end{figure}

\FloatBarrier
\subsubsection{Item-Domain Invariance Across Training}
\label{app:pad_epochs}

Figure~\ref{fig:pad_epochs_app} extends Figure~3 of the main paper
by showing the pairwise Proxy $\mathcal{A}$-distance between
source-domain item representations at six training checkpoints
(epochs 1, 100, 500, 1000, 2000, 4000).

\begin{figure}[!htbp]
\centering
\includegraphics[width=\linewidth]{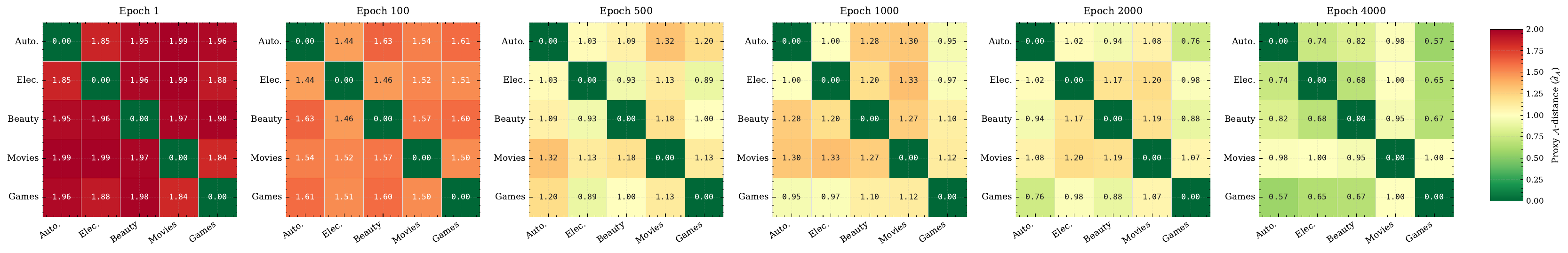}
\caption{Proxy $\mathcal{A}$-distance between source-domain item
representations at epochs 1, 100, 500, 1000, 2000, and 4000.}
\label{fig:pad_epochs_app}
\end{figure}

\end{document}